\newif\ifRuodu
\renewcommand{\d}{\mathrm{d}}
\newcommand{\p}{\mathbb{P}}
\newcommand{\var}{\mathrm{var}}
\newcommand{\E}{\mathbb{E}}    % expectation
\newcommand{\R}{\mathbb{R}}    % real numbers
\newcommand{\ES}{\mathrm{ES}}
  \newcommand{\id}{\mathds{1}}
\theoremstyle{plain}
\newtheorem{theorem}{Theorem} 
\newtheorem{corollary}[theorem]{Corollary}
\newtheorem{lemma}[theorem]{Lemma}
\newtheorem{proposition}[theorem]{Proposition}
\theoremstyle{definition}
\newtheorem{example}{Example}
\theoremstyle{remark}
\def\laweq{\buildrel \mathrm{d} \over =}
\title{A new characterization of second-order stochastic dominance} 
\author{Yuanying Guan\thanks{Department of Mathematical Sciences and Department of Finance \& Real Estate, DePaul University, USA.  E-mail: \href{mailto:yguan8@depaul.edu}{yguan8@depaul.edu}.}  \and Muqiao Huang\thanks%
  {Department of Statistics and Actuarial Science,  University of Waterloo, Canada.
  E-mail: \href{mailto:m5huang@uwaterloo.ca}{m5huang@uwaterloo.ca}.} \and Ruodu Wang\thanks%
  {Department of Statistics and Actuarial Science,  University of Waterloo, Canada.
  E-mail: \href{mailto:wang@uwaterloo.ca}{wang@uwaterloo.ca}.}}
\pgfplotsset{compat=1.18}
\begin{document}
\maketitle
\begin{abstract}
We provide a new characterization of second-order stochastic dominance,
 also known as increasing concave order.
The result has an intuitive interpretation that adding  
a risk with negative expected value in adverse scenarios 
makes the resulting position generally  less desirable for risk-averse agents. 
A similar characterization is also found for convex order and increasing convex order.
The proof  techniques for the main result are based on properties of Expected Shortfall, a family of risk measures that is popular in banking and insurance regulation. 
Applications in risk management and insurance are discussed.

~ 

\noindent \textbf{Keywords}: Expected Shortfall, stochastic dominance,  convex order, dependence, Strassen's theorem
\end{abstract}

\section{Introduction}
 
 Second-order stochastic dominance (SSD), also known as increasing concave order, 
 is one of the most fundamental tools in stochastic comparison and decision making under risk (e.g., \cite{HR69} and \cite{RS70}). 
 For general treatments, we refer to the monographs \cite{MS02} and \cite{SS07}. 
 
This short paper provides a new characterization of SSD.  
 Let $L^1$ be the set of integrable random variables in an atomless probability space $(\Omega,\mathcal F,\p)$, which we fix throughout. 
 We first give the standard definitions for some stochastic orders. For $X,Y\in L^1$, we say that $X$ dominates    $Y$ 
\begin{enumerate}[(a)]
\item  in   {SSD}, denoted by 
  $X \ge_{\rm ssd} Y$, if $\E[u(X)]\ge \E[u(Y)]$ for all increasing concave functions $u$;
\item  in   {increasing convex order}, denoted by 
  $X \ge_{\rm icx} Y$, if $\E[u(X)]\ge \E[u(Y)]$ for all increasing convex functions $u$;
  \item in   {convex order}, denoted by 
  $X \ge_{\rm cx} Y$, if $\E[u(X)]\ge \E[u(Y)]$ for all   convex functions $u$.
\end{enumerate} 
Throughout the paper, ``increasing" is in the non-strict sense.
% In each definition above, the inequality $\E[u(X)]\ge \E[u(Y)]$ is required to hold only for $u:\R\to \R$ such that the   expectations exist.

%SSD is also known as   increasing concave order in probability theory.
 In decision theory, $X$ and $Y$ in   comparison are usually interpreted as random payoffs or wealths. 
Instead, by interpreting  $X$ and $Y$ as losses,  SSD can be converted into increasing convex order, since $X\le_{\rm ssd} Y$ is equivalent to $-X\ge_{\rm icx} -Y$. 
  Increasing convex order is also known as stop-loss order in actuarial science; see e.g., \cite{DDGKV02}.
  We write $X\laweq Y$ if $X$ and $Y$ are identically distributed, which is precisely the symmetric part of each relation above.

Classic results on the representation of SSD are obtained by the celebrated work of \cite{S65} and \cite{RS70}.
  This representation result can be summarized as follows: For $X,Y\in L^1$, 
$
X\ge_{\rm ssd} Y
$
holds if and only if $
Y\laweq W + Z
$
for some $W,Z\in L^1$ such that $W\laweq X$ and 
\begin{align}\label{eq:classic}
\mbox{$\E[Z|W]\le0$.}
\end{align}
See Theorem 4.A.5 of \cite{SS07} for this result. 
 Condition \eqref{eq:classic} means that $(W,W+Z)$ forms a supermartingale.

The main result of this   paper is to provide a different representation of SSD,
where the corresponding condition for the additive payoff is weaker than \eqref{eq:classic}. 
\begin{theorem}\label{th:main}
 For any $X,Y\in L^1$,
 $
X\ge_{\rm ssd} Y
$
holds if and only if $
Y\laweq W + Z
$
for some $W,Z\in L^1$ such that $W\laweq X$ and 
\begin{align}\label{eq:new}
\mbox{$\E[Z|W\le x]\le 0$ for all relevant values of $x$.}\end{align}
\end{theorem}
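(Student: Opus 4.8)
The plan is to prove the two implications separately, and the ``only if'' direction turns out to be an immediate corollary of the classical representation \eqref{eq:classic} (a form of Strassen's theorem): assuming $X\ge_{\rm ssd}Y$, that result (Theorem 4.A.5 of \cite{SS07}) produces $W,Z\in L^1$ with $Y\laweq W+Z$, $W\laweq X$ and $\E[Z\mid W]\le0$ a.s.; since for each $x$ the indicator $\id_{\{W\le x\}}$ is $\sigma(W)$-measurable and nonnegative, $\E[Z\,\id_{\{W\le x\}}]=\E[\id_{\{W\le x\}}\,\E[Z\mid W]]\le0$, which is exactly \eqref{eq:new} with the same $W,Z$. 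So \eqref{eq:new} is genuinely weaker than \eqref{eq:classic}, and this direction needs nothing new; all the work is in the converse — that the weaker condition \eqref{eq:new} still forces $X\ge_{\rm ssd}Y$ — which is where Expected Shortfall enters.

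For that ``if'' direction, suppose $Y\laweq W+Z$ with $W\laweq X$ and \eqref{eq:new}. I would rely on two standard facts about the left-tail Expected Shortfall $\mathrm{LES}_p(U):=\tfrac1p\int_0^p F_U^{-1}(t)\,\d t$ (for $U\in L^1$, $p\in(0,1]$, with $\mathrm{LES}_1=\E$): first, $U\ge_{\rm ssd}V$ if and only if $\mathrm{LES}_p(U)\ge\mathrm{LES}_p(V)$ for all $p\in(0,1]$; and second, the Rockafellar--Uryasev-type dual formula $\mathrm{LES}_p(U)=\sup_{c\in\R}\bigl(c-\tfrac1p\E[(c-U)^+]\bigr)$. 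Because $\mathrm{LES}_p$ is law-invariant and $W\laweq X$, $W+Z\laweq Y$, it suffices to show $\E[(c-W-Z)^+]\ge\E[(c-W)^+]$ for every $c\in\R$: taking $\sup_c$ of $c-\tfrac1p\E[\,\cdot\,]$ then yields $\mathrm{LES}_p(Y)\le\mathrm{LES}_p(X)$ for all $p\in(0,1)$, while the $p=1$ case follows from $\E[Z]=\lim_{x\to\infty}\E[Z\,\id_{\{W\le x\}}]\le0$ by dominated convergence, giving $X\ge_{\rm ssd}Y$. The pointwise bound is immediate: $(c-W-Z)^+\ge(c-W-Z)\,\id_{\{W\le c\}}$, and taking expectations, using $(c-W)\,\id_{\{W\le c\}}=(c-W)^+$ together with \eqref{eq:new} at $x=c$ (the case $\p(W\le c)=0$ being trivial, since then $(c-W)^+=0$ a.s.), gives $\E[(c-W-Z)^+]\ge\E[(c-W)^+]-\E[Z\,\id_{\{W\le c\}}]\ge\E[(c-W)^+]$.

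The main obstacle is really to see why this ``if'' direction goes through given such a weak hypothesis. With $\E[Z\mid W]\le0$ one gets SSD for free by conditional Jensen ($\E[u(W+Z)]\le\E[u(\E[W+Z\mid W])]\le\E[u(W)]$), but \eqref{eq:new} gives no conditional control at all — $\E[Z\mid W]$ may change sign — so that route is closed and one must argue with integrated quantities instead. The crucial point is that the truncation $(c-U)^+$ only ``feels'' $U$ on $\{U\le c\}$, hence feels $Z$ only on $\{W\le c\}$, which is exactly the set where \eqref{eq:new} constrains $Z$; passing from these truncated/Expected-Shortfall functionals back to all increasing concave test functions is then precisely the quantile characterization of SSD. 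What still needs care are the null-probability edge cases (atoms of $W$, values $x$ with $\p(W\le x)=0$) and the $p=1$ limit, but these are routine.
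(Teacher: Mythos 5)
Your proof is correct, and while it shares the paper's reliance on Expected Shortfall--type functionals, the key ``if'' direction is argued by a genuinely different route. The paper proves Proposition \ref{pr:main} by showing $\ES_p(X-Z)\ge \ES_p(X)$ (with $X=-W$) first only at levels $p$ where the tail event has exact probability, using Lemma \ref{lem:1}(ii)--(iv), and then extends the inequality to all levels by a separate patching argument: outside that set the map $p\mapsto(1-p)\ES_p(X)$ is piecewise linear while its counterpart for $X-Z$ is concave, so the inequality propagates (the Figure \ref{FIG:pf} step, needed precisely because $W$ may have atoms). You bypass the atom issue entirely via the variational (Rockafellar--Uryasev-type) representation of the left-tail functional together with the one-line pointwise bound $(c-W-Z)^+\ge(c-W-Z)\id_{\{W\le c\}}$, which gives $\E[(c-W-Z)^+]\ge\E[(c-W)^+]-\E[Z\id_{\{W\le c\}}]\ge\E[(c-W)^+]$ for every $c$, uniformly over all distributions; your handling of the edge cases ($\p(W\le c)=0$, and the mean comparison via dominated convergence as $x\to\infty$) is fine, and the ``only if'' direction is identical to the paper's (classic Strassen representation plus the tower property). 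In fact your route can be shortened further: since $\E[\min(U,c)]=c-\E[(c-U)^+]$ and SSD is characterized by comparing $\E[\min(\cdot,c)]$ over all $c\in\R$, your displayed inequality already yields $W+Z\le_{\rm ssd}W$ without the detour through the quantile characterization and the supremum at each level $p$. What your approach buys is a shorter, atom-free proof of the crucial implication; what the paper's approach buys is an argument conducted purely at the level of ES and quantile functions, consistent with its risk-measure emphasis and reusing the Lemma \ref{lem:1} toolkit that also serves later results such as Proposition \ref{prop:improve2}.
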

In \eqref{eq:new}, relevant values of $x$ are those that satisfy $\p(W\le x)>0$. 

Note that  \eqref{eq:new} implies $\E[Z]\le 0$ by taking $x\to\infty$.
Condition \eqref{eq:new}
is clearly weaker than
\eqref{eq:classic}, because the latter
can be equivalently written as  
$\E[Z|W=x]\le 0$ for all almost every $x $  in the range of $W$ (here, the conditional expectations are chosen as a regular version). 
 If $Z$ is a function of $W$, then \eqref{eq:classic} is very restrictive, as 
it means   $Z\le 0$, whereas \eqref{eq:new} can hold for a wide range of models that does not require $Z\le 0$. 
Moreover, \eqref{eq:new} is much easier to check in practice, 
since the event $\{W\le x\}$ has a positive probability for  every relevant $x$,
whereas the event $\{W=x\}$   has zero probability for all $x$ when $W$ is continuously distributed. Two examples comparing \eqref{eq:classic} and \eqref{eq:new} are presented in Section \ref{sec:ex}. 
In Section \ref{sec:application}, we discuss applications of the new condition to risk management and insurance, including   stochastic improvers, marketable insurance contracts, and stop-loss premium calculation.

The main interpretation of Theorem \ref{th:main}
is that, for a risk-averse decision maker with random wealth $W$, adding a risk $Z$ with negative expectation in adverse scenarios, that is, when $W$ is small,
makes the resulting position $W+Z$ generally  less desirable than $W$; see Section \ref{sec:4} for more discussions. 

To prove Theorem \ref{th:main}, the main step is to justify 
$W+Z\le_{\rm ssd} W$, which we summarize in the following proposition.

\begin{proposition}\label{pr:main}
 For any $W,Z\in L^1$ satisfying  \eqref{eq:new}, 
 $W+Z \le_{\rm ssd} W$ holds. 
\end{proposition}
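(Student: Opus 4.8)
The plan is to recast $\le_{\rm ssd}$ through Expected Shortfall and then exploit the freedom in choosing the event that realizes the worst tail. Recall the classical quantile characterization of SSD: for $U,V\in L^1$, $U\ge_{\rm ssd}V$ if and only if $\int_0^p F_U^{-1}(t)\,\dd t\ge\int_0^p F_V^{-1}(t)\,\dd t$ for every $p\in(0,1]$, the left-hand side being, up to sign and normalization, the Expected Shortfall of the loss $-U$ at level $p$. On the atomless space $(\Omega,\mathcal F,\p)$ one moreover has the dual representation $\int_0^p F_U^{-1}(t)\,\dd t=\inf\{\E[U\mathbf{1}_A]:A\in\mathcal F,\ \p(A)=p\}$, with the infimum attained at every event $A$ obeying $\p(A)=p$ and $\{U<F_U^{-1}(p)\}\subseteq A\subseteq\{U\le F_U^{-1}(p)\}$. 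Hence it suffices to prove, for each fixed $p\in(0,1]$, that $\inf_{\p(A)=p}\E[(W+Z)\mathbf{1}_A]\le\inf_{\p(A)=p}\E[W\mathbf{1}_A]$.

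Before that, I would rewrite \eqref{eq:new} in the equivalent form $\E[Z\mathbf{1}_{\{W\le x\}}]\le 0$ for all $x\in\R$ (automatic where $\p(W\le x)=0$), and then, by dominated convergence (dominating by $|Z|\in L^1$) along $x\uparrow w_p$ with $w_p:=F_W^{-1}(p)$, deduce $\E[Z\mathbf{1}_{\{W<w_p\}}]\le 0$ as well.

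The heart of the matter is to produce a single event $A_p$ with $\p(A_p)=p$ that is optimal for $W$ in the dual representation, i.e.\ $\{W<w_p\}\subseteq A_p\subseteq\{W\le w_p\}$, and that additionally satisfies $\E[Z\mathbf{1}_{A_p}]\le 0$. Granting this, the desired inequality follows from
\[ \inf_{\p(A)=p}\E[(W+Z)\mathbf{1}_A]\le\E[(W+Z)\mathbf{1}_{A_p}]=\E[W\mathbf{1}_{A_p}]+\E[Z\mathbf{1}_{A_p}]\le\E[W\mathbf{1}_{A_p}]=\inf_{\p(A)=p}\E[W\mathbf{1}_A]. \]
To build $A_p$, write $A_p=\{W<w_p\}\cup B$ with $B\subseteq\{W=w_p\}$ and $\p(B)=\beta:=p-\p(W<w_p)$; the definition of the quantile gives $0\le\beta\le\gamma:=\p(W=w_p)$, and any such $B$ keeps $A_p$ optimal for $W$. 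If $\gamma=0$ then $\beta=0$ and $\E[Z\mathbf{1}_{A_p}]=\E[Z\mathbf{1}_{\{W<w_p\}}]\le0$. If $\gamma>0$, then by atomlessness (Lyapunov's convexity theorem applied to the non-atomic vector measure $B\mapsto(\p(B),\E[Z\mathbf{1}_B])$ on measurable subsets of $\{W=w_p\}$) there is a $B$ with $\p(B)=\beta$ and $\E[Z\mathbf{1}_B]=(\beta/\gamma)\,\E[Z\mathbf{1}_{\{W=w_p\}}]$, whence
\[ \E[Z\mathbf{1}_{A_p}]=\E[Z\mathbf{1}_{\{W<w_p\}}]+\tfrac{\beta}{\gamma}\E[Z\mathbf{1}_{\{W=w_p\}}]=\tfrac{\beta}{\gamma}\E[Z\mathbf{1}_{\{W\le w_p\}}]+\Bigl(1-\tfrac{\beta}{\gamma}\Bigr)\E[Z\mathbf{1}_{\{W<w_p\}}]\le0, \]
since $\beta/\gamma\in[0,1]$ and both expectations on the right are $\le0$. (The case $p=1$ is immediate with $A_1=\Omega$, as $\E[Z]\le0$.)

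I expect the main obstacle to be precisely this atom of $W$ at the quantile $w_p$: when $\p(W=w_p)>0$, neither $\{W\le w_p\}$ nor $\{W<w_p\}$ has the right mass, and an arbitrary splitting $B$ of the atom need not satisfy $\E[Z\mathbf{1}_B]\le0$; it is the convex-combination identity above, made available by atomlessness, that forces $\E[Z\mathbf{1}_{A_p}]\le0$. The other ingredients — the quantile/ES reformulation of SSD and the dual representation over events of prescribed probability — are standard and likewise rely only on the atomlessness of $(\Omega,\mathcal F,\p)$.
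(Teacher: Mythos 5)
Your proof is correct, and while it starts from the same general strategy as the paper (reformulating the SSD comparison through tail expectations of quantile integrals, i.e.\ through Expected Shortfall, and exploiting the evaluation of these quantities on tail events), the key technical step --- dealing with a possible atom of $W$ at the quantile --- is handled in a genuinely different way. The paper sets $X=-W$, proves $\ES_p(X-Z)\ge \ES_p(X)$ only at the ``clean'' levels $p\in P_X=\{p:\p(X<Q_X(p))=p\}$, where a tail event of probability exactly $1-p$ exists without splitting an atom, and then extends the inequality to all of $(0,1)$ by a structural argument: $\phi_X(p)=(1-p)\ES_p(X)$ is linear on intervals outside $P_X$, $\phi_{X-Z}$ is concave, and $\E[Z]\le0$ controls the endpoint $p=0$, so the inequality propagates across the flat pieces of the quantile function. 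You instead treat every level $p$ uniformly by constructing an explicit event $A_p$ with $\p(A_p)=p$, sandwiched between $\{W<w_p\}$ and $\{W\le w_p\}$ (hence optimal for $W$ in the lower-tail dual representation), on which $Z$ integrates to at most zero; the proportional splitting of the atom $\{W=w_p\}$ via Lyapunov's convexity theorem is exactly what makes the convex-combination identity $\E[Z\id_{A_p}]=\tfrac{\beta}{\gamma}\E[Z\id_{\{W\le w_p\}}]+(1-\tfrac{\beta}{\gamma})\E[Z\id_{\{W<w_p\}}]\le0$ available, and the rest of your chain of inequalities is sound (note that only the Hardy--Littlewood lower bound is needed for $W+Z$, so attainment of the infimum is required only for $W$, which your sandwich condition provides). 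What each approach buys: yours avoids the concavity/piecewise-linearity analysis of $\phi$ and the study of the set $P_X$, at the cost of invoking Lyapunov (or an equivalent atom-splitting device) and hence atomlessness in a more essential way at each fixed $p$; the paper's avoids any event construction at the problematic levels and uses only elementary properties of the function $\phi$, but needs the argument that the quantile of $X$ is constant on intervals missing $P_X$. Two minor points worth fixing in your write-up: say explicitly which quantile $F_W^{-1}(p)$ denotes (either the left or the right version works, since both satisfy $\p(W<w_p)\le p\le\p(W\le w_p)$, which is all you use), and note that the case $\p(W<w_p)=0$ is covered trivially since then $\E[Z\id_{\{W<w_p\}}]=0$.
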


Up to the best of our knowledge, both Theorem \ref{th:main} and Proposition \ref{pr:main} are new to the literature.

Proposition \ref{pr:main} 
is closely related to the main result of \cite{B17},
where the author showed that for  $W$ taking values in $[0,1]$ and $Z$ taking values in $[-1,1]$,
if $\E[Z]=0$ and 
\begin{align}\label{eq:1}  \mbox{$\E[Z|W\ge x]\ge 0$ for all relevant values of $x$,}
 \end{align} 
then $W+Z \ge_{\rm cx} X$.
This result also follows from Corollary  3.3 of
  \cite{LLW16}.
Despite the close connection, 
there are several additional merits of our results and the proof approach.
First, our result works for both SSD and convex order,
whereas the condition \eqref{eq:1} of \cite{B17} does not generalize to SSD.
Indeed, \citet[Corollary 1]{B17} claimed that \eqref{eq:1} together with $\E[Z]\le 0$ yields $X+Z\le_{\rm ssd} X$, but $\E[Z]<0$ is not possible if  \eqref{eq:1} holds; thus SSD is not covered except for the case of convex order.
Similarly, results in \cite{LLW16} rely on the notion of expectation dependence (\cite{W87}; see Section \ref{sec:4}), but our condition \eqref{eq:new} is different from expectation dependence unless $\E[Z]=0$.
Second, our proof techniques are completely different from those of \cite{B17}. Our proof is much shorter, and it is based on risk measures, in particular, the Expected Shortfall (ES, also known as CVaR or TVaR), one of the most important risk measures in finance and insurance (\cite{MFE15}).
Thus, the proof is more accessible to  scholars in risk management. 
Third, our result is formulated on $L^1$ without any restriction on the range of the random variables,
and the proof argument is unified for all random variables without using discrete approximation or taking limits.

 \section{Proof of the main result}
 
Let us first define the risk measure ES used in our proof. ES at level $p\in [0,1)$ is defined by $$\ES_p(X) = \frac{1}{1-p}\int_p^1Q_X(t)\d t,~~X\in L^1,$$
where  $$Q_X(t)=\inf\{x\in \R: \p(X\le x)>t\}$$ is the right $t$-quantile of $X$ at $t\in (0,1)$. 
For $X\in L^1$,   denote by $\phi_X$ the function on $[0,1]$ given by $\phi_X(p)= (1-p) \ES_p(X)$ on $[0,1)$ and $\phi_X(1)=0$. 
We first state a few simple facts on  ES and the quantile function in the following lemma.

\begin{lemma}\label{lem:1}
For $X,Y\in L^1$, the following statements hold:
\begin{enumerate}[(i)]
\item $X\ge_{\rm icx} Y$ if and only if  $\ES_p(X)\ge \ES_p(Y)$ for all $p\in (0,1)$;
%\item for   $p\in (0,1)$, there exists $A_X\in \mathcal F$ such that $\p(A_X)=1-p$ and $\ES_p(X)=\E[X|A_X]$;
\item for $p\in (0,1)$,  $\ES_p(X) \ge \E[X|B]$ for any $B\in \mathcal F$ with $\p(B)=1-p$;

 \item for $p\in (0,1)$,    if $\p(X<Q_X(p))=p$, then 
 $\ES_p(X)=\E[X|X\ge Q_X(p)]$;

\item for $p\in (0,1)$, if $\p(X<Q_X(q))<\p(X<Q_X(p))$ for all $q\in (0,p)$, then $\p(X<Q_X(p))=p$;

\item the function $\phi_X$   is continuous and concave on $[0,1]$, and   its derivative  is $-Q_X(p)$ at almost every $p\in (0,1)$.
\end{enumerate}
\end{lemma}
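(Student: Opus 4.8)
The plan is to dispatch the five items in the order (v), (iv), then (ii)--(iii), then (i), so that later parts can reuse the earlier ones. Items (v) and (iv) are elementary. For (v), the key point is that $\phi_X(p)=(1-p)\ES_p(X)=\int_p^1 Q_X(t)\,\d t$ for every $p\in[0,1)$, and the same formula returns the assigned value $\phi_X(1)=0$, so $\phi_X(p)=\int_p^1 Q_X(t)\,\d t$ on all of $[0,1]$; since $Q_X(U)\laweq X$ for a uniform $U$ on $(0,1)$ we have $\int_0^1|Q_X(t)|\,\d t=\E|X|<\infty$, so $Q_X\in L^1(0,1)$, and hence $p\mapsto\int_p^1 Q_X(t)\,\d t$ is absolutely continuous (in particular continuous) with a.e.\ derivative $-Q_X(p)$, while concavity follows because $Q_X$ is nondecreasing. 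For (iv), I would argue the contrapositive purely at the level of the quantile function: one always has $\p(X<Q_X(p))\le p$ directly from $Q_X(p)=\inf\{x:\p(X\le x)>p\}$ (every $x<Q_X(p)$ has $\p(X\le x)\le p$); if $\p(X<Q_X(p))=p'<p$, then letting $x\downarrow Q_X(p)$ gives $\p(X\le Q_X(p))\ge p$, so $Q_X(p)$ carries mass $\ge p-p'$, and moreover $Q_X(q)=Q_X(p)$ for every $q\in(p',p)$, forcing $\p(X<Q_X(q))=p'=\p(X<Q_X(p))$ for such $q$, contrary to the hypothesis of (iv).

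The facts (ii) and (iii) both come from identifying $\phi_X(p)=\int_p^1 Q_X(t)\,\d t$ with an expectation over a canonical upper-tail event. Using atomlessness, fix $B_0$ with $\{X>Q_X(p)\}\subseteq B_0\subseteq\{X\ge Q_X(p)\}$ and $\p(B_0)=1-p$ (splitting the atom of $X$ at $Q_X(p)$ if needed, which is possible precisely because $\p(X<Q_X(p))\le p\le\p(X\le Q_X(p))$); a short computation then gives $\int_p^1 Q_X(t)\,\d t=\E[X\id_{B_0}]$. For (ii): for any $B$ with $\p(B)=1-p$, $\E[X\id_B]-\E[X\id_{B_0}]=\E[X\id_{B\setminus B_0}]-\E[X\id_{B_0\setminus B}]\le Q_X(p)\big(\p(B\setminus B_0)-\p(B_0\setminus B)\big)=0$, using $X\le Q_X(p)$ on $B\setminus B_0\subseteq B_0^c$, $X\ge Q_X(p)$ on $B_0\setminus B\subseteq B_0$, and $\p(B\setminus B_0)=\p(B_0\setminus B)$; dividing by $1-p$ gives $\ES_p(X)\ge\E[X|B]$. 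For (iii), the extra hypothesis $\p(X<Q_X(p))=p$ makes $\{X\ge Q_X(p)\}$ itself have probability $1-p$, so one may take $B_0=\{X\ge Q_X(p)\}$ and read off $\ES_p(X)=\frac{1}{1-p}\E[X\id_{\{X\ge Q_X(p)\}}]=\E[X|X\ge Q_X(p)]$.

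For (i), I would start from the standard reduction of increasing convex order to stop-loss order: $X\ge_{\rm icx}Y$ iff $\E[(X-a)_+]\ge\E[(Y-a)_+]$ for every $a\in\R$. Then I would connect the stop-loss transform to $\phi_X$ via $\E[(X-a)_+]=\int_0^1(Q_X(t)-a)_+\,\d t$: the bound $\int_p^1(Q_X(t)-a)\,\d t\le\E[(X-a)_+]$ holds for all $p$, and it becomes an equality when $p=\p(X<a)$ (equivalently, when $a=Q_X(p)$). This yields the two conjugate identities $\E[(X-a)_+]=\sup_{p\in[0,1]}(\phi_X(p)-(1-p)a)$ and $\phi_X(p)=\inf_{a\in\R}(\E[(X-a)_+]+(1-p)a)$. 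The first shows that $\phi_X\ge\phi_Y$ on $[0,1]$ implies stop-loss dominance, and the second shows the converse; since $\ES_p(X)\ge\ES_p(Y)$ for all $p\in(0,1)$ is equivalent to $\phi_X\ge\phi_Y$ on $(0,1)$, hence (by continuity) on $[0,1]$, this proves (i).

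I expect the only delicate points to be bookkeeping ones: in (ii)--(iii), pinning down the atom-split that defines $B_0$ --- the one place where atomlessness of $(\Omega,\mathcal F,\p)$ is genuinely used --- and, in (i), the endpoint conventions at $p\in\{0,1\}$ together with checking that the claimed optimizers in the conjugate identities are actually attained. None of the steps is deep; the main risk is being careless about a possible atom of $X$ at the level $Q_X(p)$.
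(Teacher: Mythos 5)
Your proposal is correct, but it takes a genuinely different route from the paper: the paper dispatches Lemma~\ref{lem:1} purely by citation --- (i) to Theorem 4.A.3 of Shaked--Shanthikumar, (ii) to Lemma 3.1 of Embrechts--Wang, (iii) to Lemma A.7 of Wang--Zitikis, (iv) to Lemma 1 of Guan--Jiao--Wang, and (v) to the definition of ES --- whereas you reconstruct the underlying arguments from the definition of $Q_X$ and the identity $\phi_X(p)=\int_p^1 Q_X(t)\,\d t$. Your (ii)--(iii) are essentially the classical proof behind the cited results: compare any $B$ with a canonical set $B_0$ satisfying $\{X>Q_X(p)\}\subseteq B_0\subseteq\{X\ge Q_X(p)\}$ and $\p(B_0)=1-p$, which exists because $\p(X<Q_X(p))\le p\le \p(X\le Q_X(p))$ and the space is atomless (correctly identified as the only place atomlessness is used); your (iv) is the same quantile bookkeeping as in the cited lemma; and your (i) replaces the citation by the Rockafellar--Uryasev-type conjugacy between the stop-loss transform $a\mapsto\E[(X-a)_+]$ and $\phi_X$, after invoking the standard fact that $\ge_{\rm icx}$ is generated by the functions $x\mapsto(x-a)_+$. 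What the paper's approach buys is brevity and reliance on the literature; what yours buys is a self-contained and unified treatment in which (ii), (iii) and (i) all flow from the single identity $\phi_X(p)=\E[X\id_{B_0}]$ and its dual description. Two minor imprecisions, neither a gap: the parenthetical ``equivalently, when $a=Q_X(p)$'' is loose --- the conditions $p=\p(X<a)$ and $a=Q_X(p)$ are not equivalent in the presence of atoms or flat stretches, but each one separately yields the equality needed for its respective conjugate identity (the first for the supremum over $p$, the second for the infimum over $a$), which is all you use; and the ``short computation'' $\int_p^1 Q_X(t)\,\d t=\E[X\id_{B_0}]$ deserves one line (e.g., via a uniform $U$ with $Q_X(U)=X$ a.s., taking $B_0=\{U>p\}$), but it is routine and correct.
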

\begin{proof}[Proof of Lemma \ref{lem:1}]
These properties are all well-known or straightforward to check. For (i), see  Theorem 4.A.3 of \cite{SS07}. 
 For (ii), see  Lemma 3.1 of \cite{EW15}.  
 For (iii), see   Lemma  A.7 of \cite{WZ21}. %e.g., Proposition 3.2 of \cite{AT02}.
 Statement (iv) follows from Lemma 1 of \cite{GJW24}.
Statement (v) follows directly from the definition of  ES.
\end{proof}
 
%We say that $p\in (0,1)$ is a left-increasing point of $Q_X$ if 
%$Q_X(p-\epsilon)<Q_X(p)$ for all $\epsilon \in (0,p)$.
%Denote by $L_X$ the set of all left increasing points of $Q_X$ in $(0,1)$. 
%Suppose $p\in L_X$.
%Note that $\p(X\ge Q_X(p))= 1-p$; see Remark 1 of \cite{GJW24}.
%By Lemma A.7 of \cite{WZ21}, we have $\E[X|X\ge Q_X(p)]=\ES_p(X)$.
 
%  
%Suppose $p\in P_X$. 
%By Lemma A.7 of \cite{WZ21}, we have $\E[X|X\ge Q_X(p)]=\ES_p(X)$.

\begin{proof}[Proof of Proposition \ref{pr:main}]
Note that  $W+Z \le_{\rm ssd} W$ is equivalent to $-W-Z\ge_{\rm icx} -W$, which we show below.
Write $X=-W$.  
By  (i) of Lemma \ref{lem:1}, 
it suffices to show   $\ES_p(X-Z)\ge \ES_p(X)$ for all $p\in (0,1)$. 
Denote by $P_X =\{p\in (0,1): \p(X<Q_p(X))=p\}$. 
For $p\in P_X $,  we have 
\begin{align*}
\ES_p(X-Z)& \ge \E[X-Z|X\ge  Q_X(p)]  \tag*{\footnotesize [by (ii)]}
\\&= \ES_p(X) - \E[Z|X\ge  Q_X(p)] \tag*{\footnotesize [by (iii)]}
\\&\ge \ES_p(X).\tag*{\footnotesize [by \eqref{eq:new}]}
\end{align*} 
 The argument is complete here if $X$ is continuously distributed, as in that case $P_X=(0,1)$. We continue with the case that the distribution of $X$ may have atoms. 
 
% Write $D_X =(0,1)\setminus P_X$.
 Suppose that an interval $(a,b)\subseteq (0,1)$ does not intersect   $P_X$ (such intervals may not exist).
 %$\subseteq D_X$.
%For each $p\in D_X$, let 
%$a=\sup\{q<p: q\in P_X\cup\{0\}\}$
%and $b=\inf\{q>p : q\in P_X\cup\{1\}\}$. 
%Since $q\mapsto Q_X(q)$ is right-continuous, we have $b>p\ge a$. 
For any $p\in (a,b)$, 
let $p^*= \inf\{q\in (0,1): Q_X(q)=Q_X(p)\}$.
If $p^*=0$, then $q\mapsto Q_X(q)$ is  constant on $(0,p)$,
and since $p\in(a,b)$ is arbitrary, we have that $q\mapsto Q_X(q)$ is  constant on $(0,b)$. 
Next suppose $p^*>0$.
Since $q\mapsto Q_X(q)$ is right-continuous,
we have $Q_X(p^*) = Q_X(p)$.  
The definition of $p^*$ implies $Q_X(q)<Q_X(p^*)$ for any $q\in (0,p^*)$.
By (iv),   $\p(X< Q_X(p^*))= p^*$ and hence $p^*\in P_X$. This yields $p^*\le a$.
Therefore, $q\mapsto Q_X(q)$ is  constant on the interval $(a,b)$ in both cases.

By (v),  $\phi_X $ is linear on any interval outside $P_X$
and $\phi_{X-Z}$ is concave, and both are continuous.
We claim that 
these properties and $\phi_{X-Z} \ge \phi_X $ on $P_X$ imply that $\phi_{X-Z} \ge \phi_X $ holds   on $(0,1)$; 
see Figure \ref{FIG:pf} for an illustration. 
This would be sufficient for $X-Z\ge_{\rm icx}X$, and 
below we show this claim.  
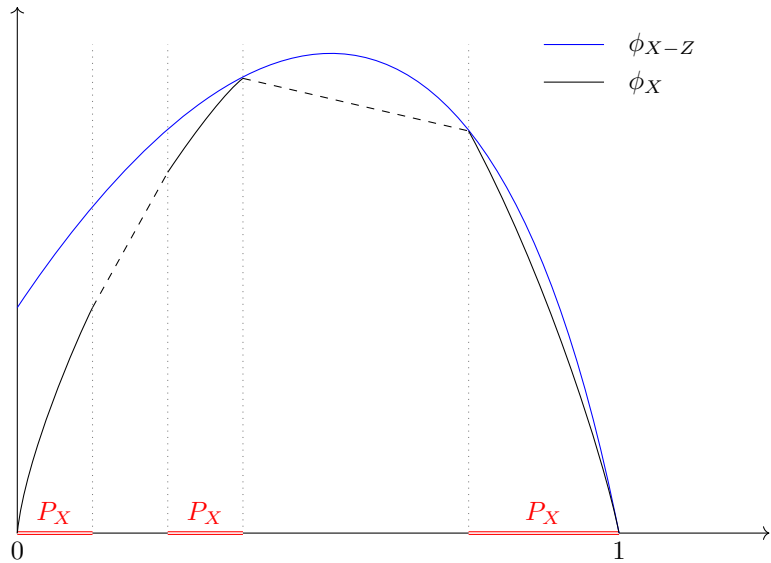
\begin{figure}[t]
\begin{center}
\begin{tikzpicture}
\draw[<->] (0,7) -- (0,0) -- (10,0);

\draw[gray,dotted] (1,0) -- (1,6.5);
\draw[gray,dotted] (2,0) -- (2,6.5);
\draw[gray,dotted] (3,0) -- (3,6.5);
\draw[gray,dotted] (6,0) -- (6,6.5);

\node[below] at (0,0) {0};
\node[below] at (8,0) {1};

\draw[red, double] (0,0) -- (1,0);
\draw[red, double] (2, 0) -- (3, 0);
\draw[red, double] (6,0) -- (8,0);
\node[red, above] at (0.5, 0)  {$P_X$};
\node[red, above] at (2.5, 0)  {$P_X$};
\node[red, above] at (7, 0)  {$P_X$};

%\node[below] at (4.5,7) {$p^*$};
\draw[blue] (7, 6.5) -- (7.8, 6.5);
\node[right] at (8,6.5) {$\phi_{X-Z}$};
\draw (7, 6) -- (7.8, 6);
\node[right] at (8,6) {$\phi_{X}$};

\draw  [blue] (0,3) .. controls (2, 6) and (6, 10) .. (8,0);

\draw   (0,0) .. controls (0.1, 1) and (0.8, 2.6) .. (1,3);
\draw   [dashed] (1,3) -- (2,4.8);
\draw   (2,4.8) .. controls (2.4, 5.4) and (2.8, 5.9) .. (3,6.05);
\draw  [dashed] (3,6.05) -- (6,5.35);
\draw   (6,5.35) .. controls (7, 3.5) and (7.8, 1).. (8,0);
\end{tikzpicture} 
\caption{An illustration of $\phi_{X-Z} \geq \phi_X$: after the inequality is shown to hold on $P_X$,  it also holds outside $P_X$  due to concavity of $\phi_{X-Z} $ and piece-wise linearity of $\phi_X$ (dashed lines).}
    \label{FIG:pf}
    
\end{center} 
\end{figure}

Suppose   that there exists $p\in (0,1)$ such that $\phi_{X-Z}(p)<\phi_X(p)$. 
Since both $\phi_{X-Z}$ and $\phi_X$ are continuous, there exists a neighbourhoold of $p$ on which   $\phi_{X-Z}<\phi_X$.
Let $a=\inf\{q\in (0,1): \phi_{X-Z}<\phi_X \mbox{~on~$(q,p]$} \}$  
and 
$b=\sup\{q\in (0,1): \phi_{X-Z}<\phi_X \mbox{~on~$[p,q)$} \}$.
If $a>0$, then 
by continuity we have $\phi_{X-Z}(a)=\phi_{X}(a)$
and $\phi_{X-Z}(b)=\phi_{X}(b)$, noting that $\phi_{X-Z}(1)=\phi_{X}(1)=0$.
If $a=0$, then 
$\phi_{X-Z}(a)\ge \phi_{X}(a)$ because  
$$\phi_{X-Z}(0)=\E[X]-\E[Z]\ge\E[X]=\phi_X(0),$$
where $\E[Z]\le 0$ is guaranteed by \eqref{eq:new}.
Therefore, in either case, \begin{equation}
    \phi_{X-Z}(a)\ge \phi_X(a)
\mbox{~~and~~} 
\phi_{X-Z}(b)\ge  \phi_X(b).\label{eq:details}
\end{equation}
Note that $(a,b)$ does not intersect $P_X$ since $\phi_{X-Z}\ge \phi_X$ on $P_X$.
Since $\phi_{X-Z}$ is concave and $\phi_X$ is linear on $(a,b)$, \eqref{eq:details} implies $\phi_{X-Z}\ge \phi_X$ on $(a,b)$; see the areas with dashed lines in Figure \ref{FIG:pf}. This yields a contradiction. 
\end{proof} 

\begin{proof}[Proof of Theorem \ref{th:main}]
The ``if" statement follows from Proposition \ref{pr:main} via
$X\laweq W \ge_{\rm ssd} W+Z \laweq Y$.
The ``only if" statement follows from the fact that \eqref{eq:classic} is stronger than \eqref{eq:new}, and thus via the representation mentioned in the introduction (Theorem 4.A.5 of \cite{SS07}).
\end{proof}

\section{Convex order and increasing convex order}
\label{sec:3}
We present a few immediate corollaries of Theorem \ref{th:main}  on increasing convex order and  convex order, commonly used in risk management and actuarial science.

\begin{corollary}\label{co:1}
For any $X,Y\in L^1$,
 $
X\le_{\rm icx} Y
$
holds if and only if $
Y\laweq W + Z
$
for some $W,Z\in L^1$ such that $W\laweq X$ and 
$$
\mbox{$\E[Z|W\ge x]\ge0$ for all relevant values of $x$.}$$
 
\end{corollary}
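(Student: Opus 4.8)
The plan is to deduce Corollary \ref{co:1} from Theorem \ref{th:main} by the standard sign-flip that converts increasing concave order into increasing convex order. Concretely, I would start from the elementary identity $X\le_{\rm icx} Y \iff -X\ge_{\rm ssd} -Y$, which is precisely the observation already made in the introduction. Applying Theorem \ref{th:main} to the pair $(-X,-Y)$, the statement $-X\ge_{\rm ssd}-Y$ holds if and only if $-Y\laweq W'+Z'$ for some $W',Z'\in L^1$ with $W'\laweq -X$ and $\E[Z'\mid W'\le x]\le 0$ for all $x$ with $\p(W'\le x)>0$.

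Next I would perform the change of variables $W=-W'$ and $Z=-Z'$. Then $W'\laweq -X$ becomes $W\laweq X$, and $-Y\laweq W'+Z'$ becomes $Y\laweq -(W'+Z')=W+Z$, so the decomposition $Y\laweq W+Z$ with $W\laweq X$ is recovered verbatim. It remains to translate the conditional-expectation condition: the event $\{W'\le x\}$ equals $\{-W\le x\}=\{W\ge -x\}$, and $\E[Z'\mid W'\le x]=\E[-Z\mid W\ge -x]=-\E[Z\mid W\ge -x]$. Hence $\E[Z'\mid W'\le x]\le 0$ is equivalent to $\E[Z\mid W\ge -x]\ge 0$, and as $x$ ranges over all values with $\p(W'\le x)=\p(W\ge -x)>0$, the quantity $-x$ ranges over exactly the relevant values for $W$ in the sense of the corollary. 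Writing $y=-x$ gives the stated condition $\E[Z\mid W\ge y]\ge 0$ for all relevant $y$, completing the equivalence.

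There is no real obstacle here; the only point requiring a small amount of care is the bookkeeping on ``relevant values'': one must check that $x$ is relevant for $W'$ (i.e. $\p(W'\le x)>0$) if and only if $-x$ is relevant for $W$ in the one-sided-tail sense used in the corollary (i.e. $\p(W\ge -x)>0$), which is immediate from $\{W'\le x\}=\{W\ge -x\}$. I would also remark that an entirely parallel argument — combining Proposition \ref{pr:main} with the same sign flip, or simply invoking the ``if'' direction of Theorem \ref{th:main} — yields the direct ordering statement $W+Z\le_{\rm icx} W$ under the hypothesis $\E[Z\mid W\ge x]\ge 0$ for all relevant $x$, which is the convex-order-flavoured analogue of Proposition \ref{pr:main} and the content closest to the results of \cite{B17} and \cite{LLW16}. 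A one-line proof along these lines is all that Corollary \ref{co:1} requires.
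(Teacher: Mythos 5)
Your proposal is correct and follows essentially the same route as the paper: the paper also applies Theorem \ref{th:main} to the relation $-X\ge_{\rm ssd}-Y$ and rewrites the decomposition as $-Y\laweq -W-Z$ with $\E[-Z\mid -W\le -x]\le 0$, which is exactly your change of variables $W'=-W$, $Z'=-Z$. Your extra bookkeeping on which values of $x$ are ``relevant'' is a harmless elaboration of the same one-line argument.
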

\begin{proof}
This corollary follows by applying Theorem \ref{th:main} to
the  relation $-X\ge_{\rm ssd} -Y$, and let $-Y \laweq -W-Z$ with $\E[-Z|-W\le -x]\le 0$ for all relevant values of $x$.
\end{proof}

 \begin{corollary}\label{co:2}
For any $X,Y\in L^1$,
the following are equivalent.
\begin{enumerate}[(i)]
\item 
 $
X\le_{\rm cx} Y;
$
\item $X\ge _{\rm ssd} Y$ and $\E[X]=\E[Y]$;
\item $
Y\laweq W + Z
$
for some $W,Z\in L^1$ such that $W\laweq X$ and 
 $\E[Z|W]=0 $. 
\item $
Y\laweq W + Z
$
for some $W,Z\in L^1$ such that $W\laweq X$ and 
 $$
\mbox{$\E[Z]=0$ and $\E[Z|W\le x]\le0$ for all relevant values of $x$;} $$

\item $
Y\laweq W + Z
$
for some $W,Z\in L^1$ such that $W\laweq X$ and 
 $$
\mbox{$\E[Z]=0$ and $\E[Z|W\ge x]\ge0$ for all relevant values of $x$.}$$
 \end{enumerate}
\end{corollary}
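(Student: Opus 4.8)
The plan is to establish the five-way equivalence by a short cycle of implications, using only facts already in hand. The anchor is $(i)\Leftrightarrow(ii)$, which is classical: since $u$ is convex exactly when $x\mapsto u(-x)$ is convex, $X\le_{\rm cx}Y$ is equivalent to $-X\le_{\rm cx}-Y$; convex order is in turn equivalent to increasing convex order together with equality of means (a standard characterization via stop-loss transforms, see, e.g., \cite{SS07}); and $-X\le_{\rm icx}-Y$ is, by definition, the same as $X\ge_{\rm ssd}Y$. Chaining these three equivalences yields $(i)\Leftrightarrow(ii)$.

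Next I would prove $(ii)\Rightarrow(iii)$. From $X\ge_{\rm ssd}Y$, the classical Strassen--Rothschild--Stiglitz representation (Theorem 4.A.5 of \cite{SS07}, recalled in the introduction) gives $W,Z\in L^1$ with $W\laweq X$, $Y\laweq W+Z$, and $\E[Z\mid W]\le0$. Since $\E[W]=\E[X]=\E[Y]=\E[W]+\E[Z]$, we get $\E[Z]=0$; as $\E[Z\mid W]$ is then a non-positive random variable with zero mean, it vanishes almost surely, so $\E[Z\mid W]=0$. The implications $(iii)\Rightarrow(iv)$ and $(iii)\Rightarrow(v)$ are immediate from the tower property: $\E[Z\mid W]=0$ forces $\E[Z]=0$ and $\E[Z\mid W\le x]=\E[Z\mid W\ge x]=0$ for every relevant $x$.

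It remains to come back from $(iv)$ and from $(v)$ to $(ii)$. For $(iv)\Rightarrow(ii)$: the sign condition $\E[Z\mid W\le x]\le0$ is exactly the hypothesis of Proposition \ref{pr:main}, which gives $W+Z\le_{\rm ssd}W$, that is $Y\le_{\rm ssd}X$; and $\E[Z]=0$ gives $\E[Y]=\E[W]=\E[X]$, so $(ii)$ holds. For $(v)\Rightarrow(ii)$: the condition $\E[Z\mid W\ge x]\ge0$ is exactly the hypothesis of Corollary \ref{co:1}, which gives $X\le_{\rm icx}Y$; combined with $\E[X]=\E[Y]$ (again from $\E[Z]=0$) and the convex-order characterization used in the first step, this yields $X\le_{\rm cx}Y$, i.e. $(i)$, hence $(ii)$. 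Together with $(i)\Leftrightarrow(ii)$, the chains $(ii)\Rightarrow(iii)\Rightarrow(iv)\Rightarrow(ii)$ and $(iii)\Rightarrow(v)\Rightarrow(ii)$ close all the loops.

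I do not expect a genuine obstacle here: every nontrivial step is either a citation (the Strassen representation, the stop-loss characterization of convex order) or a one-line appeal to Proposition \ref{pr:main} or Corollary \ref{co:1}, both proved above. The only points needing care are bookkeeping ones: the upgrade of $\E[Z\mid W]\le0$ to $\E[Z\mid W]=0$ in $(ii)\Rightarrow(iii)$, which hinges on coupling the inequality with $\E[Z]=0$; and keeping the directions of the stochastic orders and of the sign conditions on $Z$ consistent under the reflection $X\mapsto-X$ when passing between the ``$W\le x$'' formulation in $(iv)$ and the ``$W\ge x$'' formulation in $(v)$.
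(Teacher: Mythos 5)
Your proof is correct; every step checks out, including the upgrade from $\E[Z\mid W]\le 0$ to $\E[Z\mid W]=0$ using $\E[Z]=0$, the tower-property implications (iii) $\Rightarrow$ (iv), (v), and the appeals to Proposition \ref{pr:main} and Corollary \ref{co:1}. Your route differs from the paper's in two ways. First, the paper simply cites the equivalence of (i), (ii), (iii) as classical (Theorems 3.A.4 and 4.A.35 of \cite{SS07}), whereas you re-derive (ii) $\Rightarrow$ (iii) from Strassen's representation; this is harmless and makes the argument more self-contained. Second, and more substantively, the paper connects (iv) and (v) directly: for a \emph{fixed} pair $(W,Z)$ with $\E[Z]=0$ it shows that $\E[Z\mid W\le x]\le 0$ for all relevant $x$ is equivalent to $\E[Z\mid W> x]\ge 0$ for all relevant $x$, and hence to $\E[Z\mid W\ge x]\ge 0$ for all relevant $x$, the last step using $\lim_{x\uparrow y}\{W>x\}=\{W\ge y\}$ and $\lim_{x\downarrow y}\{W\ge x\}=\{W>y\}$. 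You instead route (v) back to (ii) through Corollary \ref{co:1} and the ``convex order $=$ increasing convex order plus equal means'' characterization. Both are valid; the paper's argument yields the slightly stronger fact that conditions (iv) and (v) hold for the \emph{same} coupling $(W,Z)$, while your cycle of implications only shows the existential statements are equivalent, which is all the corollary asserts, and in exchange reuses machinery already established (Theorem \ref{th:main} via its reflected form Corollary \ref{co:1}) rather than a separate measure-theoretic limit argument.
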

 \begin{proof}
The equivalence between (i), (ii) and (iii) is well known; see e.g., \citet[Theorems 3.A.4 and 4.A.35]{SS07}.
The equivalence between (ii) and (iv) follows from Theorem \ref{th:main}.
The equivalence between (iv) and (v) follows by noting that  
for $Z$ with $\E[Z]=0$, 
$$\E[Z|W\le x]\le 0 ~\forall x \iff \E[Z|W> x]\ge  0 ~\forall x \iff \E[Z|W\ge x]\ge  0 ~\forall x,$$
where the last equivalence is argued by  $\lim_{x\uparrow y}\{W> x\}=\{W\ge y\}$ and $\lim_{x\downarrow y}\{W\ge x\}=\{W>y\}$. 
 \end{proof}
 The implication (v) $\Rightarrow$ (i)  in Corollary \ref{co:2} is also obtained by \cite{LLW16} and  \cite{B17}.
To compare (iv) and (v) with the classic characterization (iii),   if $Z$ is a function of $W$, $\E[Z|W]=0$ does not hold except for the trivial case $Z=0$, whereas $\E[Z|W\le x]\le 0$ and $\E[Z|W\ge x]\ge 0$ in (iv) and (v) can hold for many models of $Z$, thus allowing for   flexibility in applications.

\section{Two illustrative examples}
\label{sec:ex}

We now illustrate our results with two simple examples,
one with  Gaussian distributions
and one with Bernoulli distributions.
The purpose here is to compare the classic condition \eqref{eq:classic} with our condition \eqref{eq:new}, and to see how much more flexibility \eqref{eq:new} offers. 
Note that both \eqref{eq:classic} and \eqref{eq:new}
are sufficient conditions for $W+Z\le_{\rm ssd} W$, but they may not be necessary.
In both examples below,  \eqref{eq:new} allows a larger range of parameters than \eqref{eq:classic}. 

\begin{example}[Gaussian distributions]
\label{ex:1}
It is well known that for two normal random variables $X\sim \mathrm{N}(\mu_X,\sigma_X^2)$ and $Y\sim \mathrm{N}(\mu_Y,\sigma_Y^2)$, $X\ge _{\rm ssd} Y$ holds if and only if 
\begin{align}\label{eq:normal}
\mbox{$\mu_X\ge \mu_Y \mathrm{~~and~~} \sigma^2_X\le \sigma^2_Y$}
\end{align}
see e.g., Example 4.A.46 of \cite{SS07}. 
Suppose that $(W,Z)$ is jointly Gaussian with mean vector $(\mu_W,\mu_Z)$ and covariance matrix 
$$
\begin{pmatrix}
\sigma_W^2 & \rho \sigma_W \sigma_Z\\
\rho \sigma_W \sigma_Z&  \sigma_Z^2  
\end{pmatrix},
$$
where $\rho$ is the correlation coefficient. 
Since SSD is invariant up to location-scale transforms, 
 we assume $\mu_W=0$ and $\sigma_W=1$ without loss of generality.
We analyze values of  the parameters $\mu_Z\in \R$, $\sigma_Z>0$ and $\rho\in[-1,1]$ obtained from $W+Z\le_{\rm ssd} W$, \eqref{eq:classic}, and \eqref{eq:new}, respectively. 
\begin{enumerate}[(a)]
\item Since $\E[W+Z] = \mu_Z$ and  $\var(W+Z) = 1+\sigma_Z^2 + 2 \rho  \sigma_Z$, we have from \eqref{eq:normal} that 
$W+Z\le_{\rm ssd} W$ if and only if  
$\mu_Z\le 0$ and $\rho\ge -\sigma_Z/2$.
\item  
Using the conditional distribution of the bivariate Gaussian distribution, we have  $\E[Z|W]=\mu_Z+ \rho\sigma_Z W$. Condition \eqref{eq:classic} is $\mu_Z+ \rho\sigma_Z W\le 0$.
Since the support of $W$ is the real line,
this means $\mu_Z\le 0$ and $\rho=0$.
\item  Condition \eqref{eq:new}
is $\mu_Z+ \rho\sigma_Z \E[W|W\le x]\le 0$ for all $x\in \R$. If $\rho\ge 0$, this holds true if and only if $\mu_Z\le 0$,
because  $\sup_{x\in \R} \E[W|W\le x]=\E[W]= 0$.
If $\rho<0$, this does not hold for any $\mu_Z$ and $\sigma_Z$, since $\E[W|W\le x]$ is unbounded from below. 
\end{enumerate}
In this example, condition \eqref{eq:new} is more flexible than \eqref{eq:classic}, although it is stronger than the equivalent condition for SSD.  
  Table \ref{tab:1} summarizes these observations.
  
\begin{table}
\renewcommand{\arraystretch}{1.5}
\begin{center}
\begin{tabular}{ l|l  }
sufficient condition \eqref{eq:classic} &  $\mu_Z\le 0$ and $\rho=0$   \\\hline 
our sufficient condition \eqref{eq:new} & $\mu_Z\le 0$ and $\rho\ge 0$   \\  \hline 
equivalent condition for SSD & $\mu_Z\le 0$ and $\rho\ge -\sigma_Z/2$   \\  \hline 
\end{tabular}
\end{center}
\caption{Conditions for $W+Z\le_{\rm ssd} W$ in Example \ref{ex:1} (Gaussian)}
\label{tab:1}
\end{table}
\end{example}

\begin{example}[Bernoulli distributions]
\label{ex:2}
Consider a Bernoulli random variable $W$ with parameter $1/2$ and a random variable $Z$ distributed as $ W-c$ for a constant $c\in \R$. 
% two Bernoulli random variables $W$ and $Y$
% both with parameter $1/2$, and let $Z=Y-c$ for a constant $c\in \R$.
Let $\rho$ be the correlation coefficient of $(W,Z)$.
Note that $\rho$ fully determines the joint distribution of $(W,Z)$, where the only degree of freedom is $\p(W=Z+c=1)=(1+\rho)/4$.
We analyze values of  the parameters $c \in \R$  and $\rho\in[-1,1]$ obtained from $W+Z\le_{\rm ssd} W$, \eqref{eq:classic}, and \eqref{eq:new}, respectively. 
\begin{enumerate}[(a)]
\item 
First, $\p(W+Z=2-c)=\p(W+Z= -c)=(1+\rho)/4$
and 
$\p(W+Z=1-c)=(1-\rho)/2$. 
For  $W+Z\le_{\rm ssd} W$ to hold, it is necessary and sufficient that $\E[(W+Z)\wedge t]\le \E[W\wedge t]$ for all $t\in \R$, where $a\wedge b=\min\{a,b\}$.
Since $W+Z$ takes values only at three points $-c,1-c,2-c$, it suffices to check these three points.
Checking the point  $t=2-c$ yields $c\ge 1/2$.
Checking the point  $t=1-c$ yields $\rho\ge 1-2c$.
Checking the point  $t= -c$ yields $c\ge 0$.
Therefore, the equivalent condition is 
$c\ge 1/2$ and $\rho\ge 1-2c$. 
\item  We can directly compute $\E[Z|W]= (1-\rho)/2-c+ \rho W$. Therefore, condition \eqref{eq:classic} means 
$(1-\rho)/2-c  \le 0$ if $\rho\ge 0$
and
$(1-\rho)/2-c + \rho \le 0$ if $\rho<0$.
Putting the two cases together, it is  $1-2c \leq \rho \leq 2c-1$ (which implies $c\ge 1/2$).
\item Note that $\E[W|W\le x]$ for relevant $x$
 takes value $0$ or $1/2$. Hence, condition \eqref{eq:new}
 is $ (1-\rho)/2-c\le 0$
 and $ (1-\rho)/2-c+ \rho/2\le 0$,
 and thus $c\ge 1/2$ and $\rho\ge 1-2c$.
\end{enumerate} 
In this example, condition \eqref{eq:new} is more flexible than \eqref{eq:classic}, and it turns out to be necessary and sufficient.  
  Table \ref{tab:2} summarizes these observations.
\begin{table}
\renewcommand{\arraystretch}{1.5}
\begin{center}
\begin{tabular}{ l|l  }
sufficient condition \eqref{eq:classic} &  $c\ge 1/2$ and $1-2c \leq \rho \leq 2c-1$  \\\hline 
our sufficient condition \eqref{eq:new} & $c\ge 1/2$ and $  1-2c \le \rho$   \\  \hline 
equivalent condition for SSD & $c\ge 1/2$ and $  1-2c \le \rho$  \\  \hline 
\end{tabular}
\end{center}
\caption{Conditions for $W+Z\le_{\rm ssd} W$ in Example \ref{ex:2} (Bernoulli)}
\label{tab:2}
\end{table}
\end{example}

In both examples, the condition via \eqref{eq:new} and the equivalent SSD condition  are both quite intuitive: the mean of $Z$ is less than $0$ (this is necessary for the SSD relation), and the correlation of $(W,Z)$ cannot be too small, as negative correlation reduces the aggregate risk. 
The condition via \eqref{eq:classic} does not have this interpretation. 

For a bi-atomic distribution of $Z$ on two arbitrary points (more general than $Z$ in Example \ref{ex:2}), we can check that condition $\eqref{eq:new}$ is equivalent to $Z+W \leq_\mathrm{ssd} W$ when the spread of $Z$ is less than or equal to 1, and otherwise it is stronger than necessary. We omit these calculations.

\section{ Risk management and insurance applications}
\label{sec:application}

In this section, we provide three applications of our main results: stochastic improvers, insurance marketability, and stop-loss premium calculation. 

\subsection{Risk reducers and stochastic improvers}

As a risk management tool, 
the concept of
 risk reducers is introduced  by \cite{CDLT14} and further studied by \cite{HTZ16}. 
A \emph{risk reducer} for $X\in L^1$ (\cite{CDLT14}) is an additive payoff $Z\in L^1$ such that 
$X+Z\le_{\rm cx} X+\E[Z]$. 
Intuitively, risk reducers are random payoffs $Z$ that make the combined payoff $X+Z$ less risky than the original wealth $X$ adjusted by the mean of $Z$. 

Inspired by this, we define a \emph{stochastic improver} for $X\in L^1$, that is, an additive payoff $Z\in L^1$ such that $X+Z\ge_{\rm ssd} X$.
Recall that risk-averse expected utility agents are modelled by increasing concave utility functions.
The intuition of a stochastic improver is that every risk-averse expected utility agent would prefer $X+Z$ over $X$, and thus the additive payoff $Z$ improves the utility of the agent with random wealth $X$.

Note that $X+Z\ge_{\rm ssd} X+\E[Z]$ is equivalent to $X+Z\le_{\rm cx} X +\E[Z]$.
Hence, for $Z$ with $\E[Z]=0$, it is a risk reducer if and only if it is a stochastic improver.
However, for random variables with non-zero mean, these two concepts are generally incompatible. 
If $\E[Z]\ge 0$, then a risk reducer is necessarily a stochastic improver,
because 
$$X+Z\ge_{\rm ssd} X + \E[Z] \ge_{\rm ssd} X.$$
However, the converse is not true; a stochastic improver  need not be a risk reducer. For instance, for any nonnegative  $X\in L^2$ with positive variance,
we have $X+X\ge_{\rm ssd} X$ but $\var(X+X)>\var(X+\E[X])$;
therefore $X$ is a stochastic improver for itself but not a risk reducer; in fact, $X+X\ge_{\rm cx} X+\E[X]$ holds (Theorem 3.A.17 of \cite{SS07}), the opposite of being a risk reducer.  
Moreover, a risk reducer $Z$ is not a stochastic improver if $\E[Z]<0$.

Denote by $\mathcal S_X$ the set of all stochastic improvers for $X\in L^1$, and let
$$
\mathcal N_X=\{Z\in L^1:  \mbox{$\E[Z|X+Z\le x]\ge 0$ for all relevant values of $x$}\}.
$$ 
The definition of SSD implies that the set $\mathcal S_X$ is convex. 
The following result connects the above two sets by using our main result. 
\begin{proposition}
\label{prop:improve}
For $X\in L^1$, $\mathcal N_X \subseteq \mathcal S_X.$
\end{proposition}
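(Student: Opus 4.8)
The plan is to reduce Proposition~\ref{prop:improve} directly to Proposition~\ref{pr:main} by a change of variables. Given $Z \in \mathcal{N}_X$, I would set $W := X + Z \in L^1$, so that $X = W + (-Z)$, and aim to show $W + (-Z) \le_{\rm ssd} W$; this is exactly $X \le_{\rm ssd} X+Z$, that is, $X+Z \ge_{\rm ssd} X$, i.e. $Z \in \mathcal{S}_X$.

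The key observation is that the defining condition of $\mathcal{N}_X$, namely $\E[Z \mid X+Z \le x] \ge 0$ for all relevant $x$, is precisely condition~\eqref{eq:new} written for the pair $(W, -Z) = (X+Z, -Z)$: multiplying by $-1$ gives $\E[-Z \mid W \le x] \le 0$ for all $x$ with $\p(W \le x) > 0$, and the set of relevant values of $x$ is the same in both formulations since it depends only on the distribution of $W = X+Z$. Hence $(W, -Z)$ satisfies~\eqref{eq:new}, and Proposition~\ref{pr:main} yields $W + (-Z) \le_{\rm ssd} W$, completing the argument.

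There is essentially no serious obstacle; the result is a short consequence of Proposition~\ref{pr:main}. The only points requiring a moment of care are the sign bookkeeping (it is $-Z$, not $Z$, that plays the role of the additive perturbation of $W$ in Proposition~\ref{pr:main}) and confirming that ``relevant values of $x$'' refers to the same set $\{x : \p(X+Z \le x) > 0\}$ in the definition of $\mathcal{N}_X$ and in~\eqref{eq:new}. As a consistency check with the surrounding discussion, one may also note that letting $x \to \infty$ in the $\mathcal{N}_X$ condition forces $\E[Z] \ge 0$, which is compatible with (and, in the strict case, necessary for) $X+Z \ge_{\rm ssd} X$.
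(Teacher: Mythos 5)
Your proposal is correct and follows essentially the same route as the paper's own proof: set $W=X+Z$, observe that the defining condition of $\mathcal N_X$ is exactly condition \eqref{eq:new} for the pair $(W,-Z)$, and apply Proposition \ref{pr:main} to conclude $W-Z\le_{\rm ssd} W$, i.e., $X+Z\ge_{\rm ssd} X$. The sign bookkeeping and the identification of the relevant values of $x$ are handled just as in the paper, so there is nothing to add.
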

\begin{proof}
It suffices to verify that for $Z\in \mathcal N_X$, $X+Z\ge_{\rm ssd} X$. 
Let $W=X+Z$.
Using Proposition \ref{pr:main}, the condition $\E[-Z|W\le x]\le 0$ for all relevant $x$ is sufficient for 
$W\ge_{\rm ssd} W-Z$, which is $X+Z\ge_{\rm ssd} X$.
\end{proof}

One may wonder whether the converse statement to Proposition \ref{prop:improve} also holds, that is, $\mathcal S_X= \mathcal N_X$.
The quick answer is negative.
As we see in Table \ref{tab:1}, for two standard Gaussian random variables $W$ and $Z$,
$W-Z\le_{\rm ssd} W$ if and only if the correlation coefficient between $W$ and $Z$ is smaller than or equal to $ 1/2$.
By writing $X=W-Z$, 
the above condition is equivalent to $Z\in \mathcal S_X$.
On the other hand, 
$Z\in \mathcal N_X$ if and only if the correlation coefficient between $W$ and $Z$ is nonpositive. 
Therefore,  $Z\in \mathcal S_X$ but $Z\not \in \mathcal N_X$.

There is a special setting in which $\mathcal S_X$ and $\mathcal N_X$ coincide. 
A random vector $(X,Y)$ is \emph{comonotonic} if there exist increasing functions $f$ and $g$ such that $X=f(X+Y)$ and $Y=g(X+Y)$ almost surely. 
\cite{HTZ16} studied risk reducers $Z$ when $(X,X+Z)$ is comonotonic. Below we obtain a characterization of stochastic improvers under the same assumption of comonotonicity.  

\begin{proposition}
\label{prop:improve2}
Let $X,Z\in L^1$ be such that $(X,X+Z)$ is comonotonic.
Then  $Z\in \mathcal N_X $ if and only if $Z\in \mathcal S_X.$
\end{proposition}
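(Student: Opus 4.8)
The plan is to prove the two directions separately. The inclusion $\mathcal N_X\subseteq\mathcal S_X$ is exactly Proposition~\ref{prop:improve} and needs no comonotonicity, so the only real content is the converse: assuming $(X,X+Z)$ comonotonic and $Z\in\mathcal S_X$, i.e.\ $X+Z\ge_{\rm ssd}X$, I must deduce $\E[Z\mid X+Z\le x]\ge 0$ for all relevant $x$. First I would set $W=X+Z$, so that comonotonicity gives increasing functions $f,g$ with $X=f(W)$ and $Z=g(W)$ almost surely, and in particular $Z$ is a \emph{function} of $W$; the target inequality becomes $\E[g(W)\mid W\le x]\ge 0$ for every $x$ with $\p(W\le x)>0$. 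Equivalently, writing $\psi(x)=\E[g(W)\id\{W\le x\}]$, I need $\psi(x)\ge 0$ for all relevant $x$. Note $\psi$ is right-continuous, $\psi(x)\to\E[Z]\le 0$? — no: from $X+Z\ge_{\rm ssd}X$ with $W\laweq X$ being false here, but $X+Z\ge_{\rm ssd}X$ still only forces $\E[X+Z]\ge\E[X]$, i.e.\ $\E[Z]\ge 0$, so $\psi(+\infty)=\E[Z]\ge 0$; that is the right boundary behaviour.

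The key step is to translate $X+Z\ge_{\rm ssd}X$, i.e.\ $W\ge_{\rm ssd}W-Z=f(W)$, into a statement about the functions via the ES machinery of Lemma~\ref{lem:1}. By Lemma~\ref{lem:1}(i) applied to losses, $W\ge_{\rm ssd}W-Z$ is equivalent to $\ES_p(-W+Z)\ge\ES_p(-W)$ for all $p\in(0,1)$, equivalently $\phi_{-W+Z}\ge\phi_{-W}$ on $[0,1]$. Here the crucial simplification from comonotonicity is that, since $f$ is increasing, $-X=-f(W)$ is a decreasing function of $W$, hence $-X$ and $Z=g(W)$ are \emph{comonotonic with $-W$} in the relevant directional sense; concretely, on the event $\{-W\ge Q_{-W}(p)\}=\{W\le Q_W(1-p)\}$ (up to atoms) one has $\{-W\ge Q_{-W}(p)\}=\{W\le x_p\}$ for the appropriate threshold $x_p$. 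Then for $p$ in the set $P_{-W}$ where the quantile event has exact probability, Lemma~\ref{lem:1}(iii) gives $\ES_p(-W+Z)=\E[-W+Z\mid W\le x_p]$ and $\ES_p(-W)=\E[-W\mid W\le x_p]$, so the SSD inequality at level $p$ reads precisely $\E[Z\mid W\le x_p]\ge 0$. As $p$ ranges over $P_{-W}$, the thresholds $x_p$ range over a dense-enough set of relevant values of $x$, and a right-continuity / monotone-limit argument on $\psi$ (mirroring the interval argument in the proof of Proposition~\ref{pr:main}) extends the inequality to \emph{all} relevant $x$, including those corresponding to atoms of $W$.

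The main obstacle I anticipate is the atoms of $W$: when $p\notin P_{-W}$ the identity in Lemma~\ref{lem:1}(iii) fails and the event $\{-W\ge Q_{-W}(p)\}$ has probability strictly larger than $1-p$, so one only gets the weaker inequality from Lemma~\ref{lem:1}(ii), which goes the wrong way for the converse direction. The remedy is the same bookkeeping as in Proposition~\ref{pr:main}: intervals of $p$ outside $P_{-W}$ correspond to the quantile function being constant, hence to no new threshold $x$ being ``hit'' — every relevant $x$ is a limit of thresholds $x_p$ with $p\in P_{-W}$ approached from the correct side — so the function $x\mapsto\E[Z\mid W\le x]$, which changes only through $\psi(x)=\E[Z\id\{W\le x\}]$ and $\p(W\le x)$, inherits the inequality by right-continuity. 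One should also record at the outset that comonotonicity makes $Z$ a deterministic increasing function $g$ of $W$, which is what legitimizes rewriting all the conditional expectations purely in terms of the level sets $\{W\le x\}$; with that in hand the argument is a direct dualization of Proposition~\ref{pr:main}'s proof rather than anything genuinely new.
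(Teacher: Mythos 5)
The easy inclusion $\mathcal N_X\subseteq\mathcal S_X$ is indeed just Proposition~\ref{prop:improve}, and your overall plan for the converse (turn the SSD relation into ES inequalities and identify $\E[\,\cdot\mid X+Z\le x]$ with an ES at the matching level via comonotonicity) is the right one, and is the paper's. But two steps in your execution are wrong as written. First, comonotonicity of $(X,X+Z)$ does \emph{not} give $X=f(W)$ and $Z=g(W)$ with $f,g$ increasing for $W=X+Z$: by the paper's definition both $X$ and $X+Z$ are increasing functions of their \emph{sum} $2X+Z$, and neither $X$ nor $Z$ need be a function of $W$ at all. For instance, with $S$ uniform on $\{1,2,3\}$, $X$ taking values $(0,0,1)$ and $W$ taking values $(1,2,2)$ as increasing functions of $S$, the pair $(X,W)$ is comonotonic but $W=2$ is compatible with both $X=0$ and $X=1$; and even when $Z$ is a function of $W$ it need not be increasing (take $Z=-X/2$, so $Z=-W$). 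This claim is, fortunately, not really needed. Second, and this is the load-bearing gap: the identity $\ES_p(-W+Z)=\ES_p(-X)=\E[-X\mid W\le x_p]$ is not delivered by Lemma~\ref{lem:1}(iii). That lemma conditions on the quantile event $\{-X\ge Q_{-X}(p)\}$ and requires $p\in P_{-X}$, whereas your conditioning event is a level set of $W$; in the same three-point example with $x=1$ one has $A=\{W\le 1\}=\{S=1\}$ and $p=2/3\in P_{-W}$, but the quantile event of $-X$ at level $2/3$ is $\{S\in\{1,2\}\}\ne A$ and $p\notin P_{-X}$. Lemma~\ref{lem:1}(ii) only gives $\E[-X\mid A]\le\ES_p(-X)$, which, as you yourself note, points the wrong way. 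Your proposed density/right-continuity patch does not repair this: for every relevant $x$ the level $p=\p(W>x)$ automatically lies in $P_{-W}$ (there is nothing to approximate on the $-W$ side), and the failure is on the $-X$ side at a fixed event $A$, which perturbing $x$ does not remove.

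The missing ingredient is the tail-event form of the ES identity, which is exactly what the paper invokes (Lemma A.7 of \cite{WZ21}): if $A$ is a $p$-tail event of a random variable $V$, i.e.\ $\p(A)=1-p$ and $V\ge V'$ a.s.\ between $A$ and $A^c$, then $\E[V\mid A]=\ES_p(V)$, with no requirement that $A$ be a quantile event. Comonotonicity of $(X,W)$ implies that every set $A=\{W\le x\}$ with $\p(A)=1-p\in(0,1)$ is a $p$-tail event of $-X$ (and trivially of $-W$), so $\E[-X\mid A]=\ES_p(-X)$ and $\E[-W\mid A]=\ES_p(-W)$ hold for \emph{every} relevant $x$ directly; combined with Lemma~\ref{lem:1}(i) applied to $-X\ge_{\rm icx}-W$ (and with $\E[Z]\ge0$ handling $\p(A)=1$), this yields $\E[Z\mid W\le x]\ge0$ with no atom bookkeeping or limiting argument. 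The paper runs this same computation in contrapositive form. So your proposal needs this tail-event lemma (or an equivalent argument) to close the central step; as written, the converse direction is not proved.
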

\begin{proof}
The ``only-if" statement follows from Proposition \ref{prop:improve}. We show the ``if" statement below. Suppose   $Z\not \in \mathcal N_X$. 
By definition of $\mathcal N_X$, there exists $x\in \R$ such that 
$\E[Z|X+Z\le x]<0$ and $\p(X+Z\le x)>0$.
Let $A=\{X+Z\le x\}$ and $p=1-\p(A)$.
Since $\E[Z]\ge 0$ as required by $Z\in \mathcal S_X$, we have $\p(A)\in (0,1)$. 
It follows that
\begin{equation}\label{eq:improve}
\E[X+Z|A]=\E[X|A]+\E[Z|A]<\E[X|A].
\end{equation}
Let $W=-X$.
Since $(X,X+Z)$ is comonotonic, 
$(W,\id_{A })$ is also comonotonic. 
Therefore, for almost every $\omega\in A $ and $\omega'\in A^c $, 
we have $W(\omega)\ge W(\omega')$.
Such 
$A$ is called a  $p$-tail event of $W$ by \cite{WZ21}; intuitively, it is a set on which $W$ takes larger values than on its complement. 
By definition, $A$ is also a p-tail event of $-X-Z=W-Z$.
 Lemma A.7 of \cite{WZ21} gives
$\E[W|A]= \ES_p(W)$ and $\E[W-Z|A]= \ES_p(W-Z)$.
Putting the above observations with \eqref{eq:improve}, we have 
$\ES_p(W-Z)> \ES_p(W)$, and hence by Lemma \ref{lem:1} part (i),
$W-Z\le_{\rm icx} W$ cannot hold.
This means that
$X+Z\ge_{\rm ssd} X$ does not hold, and $Z\not \in \mathcal S_X$.
Thus, $Z\in \mathcal S_X$ implies $Z\in \mathcal N_X$.
\end{proof}

 Applying Proposition \ref{prop:improve2} to $Z-\E[Z]$ with  $(X,X+Z)$  comonotonic, we obtain 
that the following conditions are equivalent:
\begin{enumerate}[(a)]
\item 
$Z-\E[Z]$ is a stochastic improver for $X$;
\item 
$Z$ is a risk reducer for $X$;
\item   $\E[Z|X+Z\le x]\ge \E[Z]$ for all relevant $x$.
\end{enumerate}
This result can be compared with \citet[Theorem 3.2]{HTZ16}, which states that for $Z$ that is $\sigma(X)$-measurable with $(X,X+Z)$ comonotonic, 
 (b) is equivalent to \begin{enumerate}[(a)]
\item[(d)]  $\E[Z|X \le x]\ge \E[Z]$ for all relevant $x$.
\end{enumerate} The   condition (d) is called negative expectation dependence of $Z$ on $X$ (\cite{W87}). 
Generally,
 the two conditions 
(c)  and (d)
 are not equivalent even if $(X,X+Z)$ is comonotonic; for instance, if $X$ is a constant, then (d) always holds true but (c) never holds true unless $Z$ is also a constant.
 Nevertheless, when $Z$ is $\sigma(X)$-measurable,
 (d) implies (c) because the set of events  $\{\{X\le x\}: x\in \R \}$ 
contains $\{\{X+Z\le y\}: y\in \R \}$ in this case.
Therefore, 
Proposition \ref{prop:improve2} implies Theorem 3.2 of \cite{HTZ16}
as a special case when $Z$ is $\sigma(X)$-measurable and condition (d) holds.

%
%
%That result is quite different from ours as 
%their conditional expectation is formulated on $Y$, the larger random variable     in convex order, and our conditional expectation is formulated on $W$, the smaller random variable in convex order; see Corollary \ref{co:2} where $W\le_{\rm cx} Y$.
%Nevertheless, there is an intimate connection between the two results.  
%To see this connection, we replace $Z$ by $-Z$, and assume that $Z$ is $\sigma(Y)$-measurable and  $(Y,Y-Z)$ is comonotonic. 
%Further, assume that  $Y$ and $Y-Z$ are continuously distributed; in this case, comonotonicity of $(Y,Y-Z)$ allows us to change the event $\{Y\le x\}$ for relevant $x$ to   the event  $\{Y-Z\le y\}$ for relevant $y$.
%With this, the above result says that 
% $Y-Z\le_{\rm cx} Y$
% if and only if  $\E[Z]=0$ and \eqref{eq:new5} holds. 
% Therefore, the first statement in Theorem \ref{th:strong} relaxes the conditions of measurability and comonotonicity imposed in Theorem 3.2 of \cite{HTZ16} in the case of continuously distributed $Y$ and $Y-Z$.

An example of a stochastic improver satisfying the conditions in Proposition \ref{prop:improve2} is the purchase of a protective put in a Black--Scholes financial market.  
Below, a random vector $(X,Y)$ is \emph{counter-monotonic} if $(X,-Y)$ is comonotonic.  
\begin{example}[Protective put]
Consider a continuous-time financial market model with $0$ interest rate on a time interval $[0,T]$.
For simplicity, we will assume a Black--Scholes market with a stock price process $(X_t)_{t\in [0,T]}$ that has a nonpositive return rate and constant volatility.
The assumption of nonpositive return rate is unusual, but it is needed for the analysis below. 
For details on the Black--Scholes market model used in this example, see \cite{S04}.
This market is complete, and thus any payoff can be priced with a risk-neutral probability measure $Q$. 
For $t\in [0,T]$, 
let $\mathcal F_t=\sigma(X_s: s\le t)$ and $P_t$ be the time-$t$ price of a put option that gives the payoff $(K-X_T)_+$ at maturity $T $, where $K>0$ represents the strike price. 
The strategy of holding the stock and purchasing the put option is called a protective put. 
The Black--Scholes formula gives that  $P_t$ is a decreasing function of $X_t$ for each $t\in [0,T]$. 
Let $Z_t=P_t-P_0$, that is, the time-$t$ value of purchasing the put option at time $0$.
 Girsanov's theorem gives the explicit formula of $Y_t:=\E[\d Q/\d \p|\mathcal F_t]$, 
 which guarantees that $Y_t$ is an increasing function of $X_t$ under the assumption of nonpositive return rate.  Hence, $$P_0=\E^Q[P_t] = \E\left[  \E\left[ \frac{\d Q}{\d \p}|\mathcal F_t\right]P_t \right]\le \E\left[ \frac{\d Q}{\d \p}\right]  \E[P_t]  = \E[P_t],$$
where the inequality is due to the counter-monotonicity of $(P_t,Y_t)$.
Moreover, $ X_t+Z_t $ is an increasing function of $X_t$ for $t\in [0,T]$, which can be seen from e.g., the put-call parity. Using counter-monotononicity of $(P_t,X_t+Z_t)$, which implies negative expectation dependence of $P_t$ on $X_t+Z_t$, we have, for any relevant $x\in \R$,  
$$
\E[Z_t |X_t+Z_t\le x] =\E[P_t - P_0 |X_t+Z_t\le x] \ge \E[P_t]-P_0 \ge 0.
$$
Therefore, the conditions in Proposition \ref{prop:improve2} hold for $(X_t,Z_t)$, and hence $Z_t$ is a stochastic improver for $X_t$. 
In conclusion, for any risk-averse expected utility agent, entering a protective put at time $0$ improves the expected utility of the future payoff at each time spot up to the maturity. Recall that this conclusion only holds if the asset has a nonpositive return rate, making it undesirable for most investors. The assumption of nonpositive return is replaced by that of a nonnegative return if the investor has a short position of the stock. In  that case, a call option is a stochastic improver, following the above argument. 
\end{example}

\subsection{Widely marketable insurance contracts}

Next, we consider an insurance market. 
Let $L^1_+$ be the set of nonnegative random variables in $L^1$, and elements in $L^1_+$ represent insurable losses in this section.
\cite{CDLT14} introduced the concept of 
universal markability.
An \emph{indemnity schedule} is a function $I:\R_+\to \R_+$ satisfying $0\le I(x)\le x$ for each $x\ge 0$.
An  indemnity schedule $I$ is 
\emph{universally marketable}  if for any   $X\in L^1_+$,   $w\in \R$,  and increasing concave utility function $u$, a solution $P^*$ to the equation 
\begin{equation}\label{eq:price}
\E[u(w-X+I(X)-P)] = \E[u(w-X)],~~~~P\in \R,
\end{equation}
satisfies $P^*\ge \E[I(X)]$. 
Intuitively, it means that every risk-averse expected utility agent with insurable loss $X$ would accept to purchase the insurance contract with payoff $I(X)$ at some price higher than or equal to the mean of $I(X)$; the insurance price being no less than the mean of the payoff is a natural requirement for the insurance provider to participate (see \cite{A63}); later we will discuss a few examples where this is violated. 
\citet[Theorem 3]{CDLT14} showed that an indemnity schedule  is universally marketable if and
only if it is $1$-Lipschitz.

Below, we offer a different angle. In an insurance market, the indemnity schedule $I$ and the loss $X$ are not separately considered; for instance, the indemnity schedule should be different for property insurance and for health insurance. 
Therefore, instead of looking for $I$ that is marketable for all $X$,
it is natural to look for $I$ that is marketable for a specific $X$. 
Moreover, the insurance company may be concerned about a different premium principle than the expected value (see e.g., \cite{D90} and \cite{WYP97}). We let $P_0 \ge 0$ represent the minimum acceptable price for the insurer 
for the indemnity $I$; in the previous setting it is $P_0=\E[I(X)]$.

To incorporate the above two features, we say that 
an  indemnity schedule $I$ is 
\emph{widely marketable for $(X,P_0)\in L^1_+\times \R_+$} if for any $w\in \R$  and increasing concave utility function $u$, a solution $P^*$ to \eqref{eq:price} satisfies $P^*\ge  P_0$. 
Here, we choose the word ``widely" to reflect that this requirement is less general than universal marketability (which holds for all $X$), but   it is still quite broad, as it applies to all risk-averse expected utility agents.
Our main results allow us to study this property for flexible choices of $P_0$, which is summarized in the next proposition. 
\begin{proposition}
\label{prop:marketable}
Let $I$ be an indemnity schedule, $X\in L^1_+$,  and $P_0\in \R_+$.
If $\E[I(X)|X-I(X)\ge x] \ge P_0$ for all relevant $x$, then $I$ 
is  {widely marketable} for $(X,P_0) $.
\end{proposition}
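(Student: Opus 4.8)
The plan is to reduce wide marketability to the stochastic-improver inclusion of Proposition~\ref{prop:improve}. Replacing $u(\cdot)$ by $u(w+\cdot)$, which preserves being increasing and concave, we may take $w=0$. Write $R=X-I(X)\in L^1_+$ for the retained loss and $Z=I(X)-P_0\in L^1$, so that the insured position at premium $P$ is $-R-P$, the uninsured position is $-X$, and $(-X)+Z=-R-P_0$. Fix an increasing concave $u$. Since $P\mapsto\E[u(-R-P)]$ is non-increasing, a solution $P^*$ of \eqref{eq:price} (read, as usual, as the largest premium the agent is willing to pay) satisfies $P^*\ge P_0$ as soon as $\E[u(-R-P_0)]\ge\E[u(-X)]$. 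Hence it suffices to prove $(-X)+Z\ge_{\rm ssd}-X$, that is, that $Z$ is a stochastic improver for $-X$.

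First I would check that $Z\in\mathcal N_{-X}$, so that Proposition~\ref{prop:improve} yields $Z\in\mathcal S_{-X}$ and finishes the argument. Unwinding the definition, $Z\in\mathcal N_{-X}$ requires $\E[Z\mid(-X)+Z\le x]\ge0$ for every $x$ with $\p((-X)+Z\le x)>0$. Using $(-X)+Z=-P_0-R$, the conditioning event is $\{R\ge -P_0-x\}$; setting $t=-P_0-x$, the parameter $t$ ranges over $\R$ as $x$ does, and $\{t:\p(R\ge t)>0\}$ is exactly the set of relevant values appearing in the hypothesis. Thus $Z\in\mathcal N_{-X}$ is equivalent to $\E[I(X)-P_0\mid X-I(X)\ge t]\ge0$ for all relevant $t$, which is precisely the assumed inequality. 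Proposition~\ref{prop:improve} then gives $-R-P_0=(-X)+Z\ge_{\rm ssd}-X$, hence $\E[u(-R-P_0)]\ge\E[u(-X)]$ for every increasing concave $u$, and therefore $P^*\ge P_0$ for all $w$ and $u$, i.e. $I$ is widely marketable for $(X,P_0)$.

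I do not expect a serious obstacle: all of the analytic content already sits in Proposition~\ref{pr:main}/Proposition~\ref{prop:improve}. The two points that deserve a careful sentence are, first, the reduction step itself — the monotonicity of $P\mapsto\E[u(-R-P)]$ and the precise reading of ``a solution $P^*$'' of \eqref{eq:price} as the maximal acceptable premium, together with a remark that the relevant expectations are finite under the standing integrability conventions — and, second, the sign-and-threshold bookkeeping identifying the defining inequality of $\mathcal N_{-X}$ with the hypothesis $\E[I(X)\mid X-I(X)\ge x]\ge P_0$. Both are routine, so the final proof should run to only a few lines.
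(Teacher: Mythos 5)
Your proposal is correct and takes essentially the same route as the paper: both reduce wide marketability to the SSD relation $-X+I(X)-P_0\ge_{\rm ssd}-X$ and obtain it from the main sufficient condition, with the same sign-and-threshold bookkeeping. The only cosmetic difference is that you pass through Proposition~\ref{prop:improve} (checking $I(X)-P_0\in\mathcal N_{-X}$), whereas the paper invokes Proposition~\ref{pr:main} directly with $W=-X+I(X)-P_0$ and $Z=P_0-I(X)$; since Proposition~\ref{prop:improve} is itself an immediate consequence of Proposition~\ref{pr:main}, this is the same argument.
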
 
\begin{proof}
Since the utility function $u$ is increasing and concave,   a solution $P^*\ge P_0$ to \eqref{eq:price} exists for every $u$ if
$
-X+I(X)-P_0\ge_{\rm ssd} -X. 
$
By Proposition \ref{pr:main} with $W=-X+I(X)-P_0$ and $Z=P_0-I(X)$, a sufficient condition for the desired SSD relation is 
$$\E[P_0-I(X)|-X+I(X)-P_0\le x]\le 0 \mbox{~ for all relevant values of $x$.}$$
By rearranging terms, the above condition is equivalent to $\E[I(X)|X-I(X)\ge x] \ge P_0$ for all relevant $x$.
\end{proof}

Due to the specification of $X$ and $P_0$, the indemnity $I$ in Proposition \ref{prop:marketable} need not be  continuous (see the example below), thus a wider class of indenmity schedules than those in Theorem 3 of \cite{CDLT14} can be included. 

\begin{example}[Fixed idenmity plan]
Let the indenmity schedule $I$ be given by $I(x)=\id_{\{x\ge 1\}}$. That is, an pre-determined payment of $1$ is paid if the loss  reaches or exceeds $1$.
This kind of contract is called a fixed indemnity plan in health insurance (e.g., a fixed amount is paid upon hospital admission). 
Clearly, $I$ is not continuous, but it satisfies all conditions to be an indemnity schedule. 
Suppose that $X$ is exponentially distributed with mean $1$.  
We can compute, for $x\in [0,1]$,
\begin{align*}
\E[I(X)|X-I(X) \ge x] &=\p(X\ge 1 |X-I(X) \ge x)
\\&= \frac{\p(X\ge 1\mbox{~and~} X-I(X)\ge x) }{\p(X-I(X)\ge x)}
\\&= \frac{\p(X\ge 1 + x)}{ \p(X\in [x,1)\cup [1+x,\infty))}
\\&= \frac{e^{-(1+x)}}{ e^{-(1+x)}  + e^{-x}-e^{-1}}
 = \frac{1}{ 1  + e -e^x}\ge e^{-1}. 
\end{align*}
Moreover, for $x>1$, $\E[I(X)|X-I(X) \ge x] =1$.
Therefore, with any $P_0\in[0, e^{-1}]$, $I$ is widely marketable for $(X,P_0).$ In particular, $I$ is widely marketable for $(X,\E[I(X)])$, by noting $\E[I(X)]=e^{-1}$. \end{example}

If $P_0>\E[I(X)]$, then the condition 
$\E[I(X)|X-I(X)\ge x] \ge P_0$ in Proposition \ref{prop:marketable}  cannot hold for all relevant $x$. Therefore, 
if the insurance company  charges more than the expected value of the insurance payment, 
then its contract cannot be attractive to all risk-averse expected utility agents (although it may still be attractive to a subset of such agents).
This is because risk-averse agents include risk-neutral ones, who do not want to pay anything more than the expected insurance payment and are not the typical insurance buyers. 
This may be seen as a limintation of the applicability of Proposition \ref{prop:marketable}. 
The same limintation applies to the formulation of \cite{CDLT14}, which relies on the stronger condition $P_0=\E[I(X)]$.
On the positive side, the additional flexibility of $P_0$ provided by Proposition \ref{prop:marketable}
allows an insurance company to quantitatively understand how to attract all risk-averse expected utility agents for a particular product, if they wish to, by lowering their premium to the level that the condition in Proposition \ref{prop:marketable} holds. 
This is relevant in the contexts  of many different insurance  products, commercial promotions, and 
 government subsidized insurance. In each of these contexts, the insurance company may have an incentive to let the premium go below the expected insurance payment.

\subsection{Stop-loss premium calculation}

The SSD relation is also closely related to the stop-loss premium in insurance.  
Let $X\in L^1_+$ be an insurable loss. 
For a deductible level $d\ge 0$, 
the insurance contract 
that pays $(X-d)_+$
 is called a stop-loss insurance contract, which is  a popular form of  insurance coverage; \cite{A63} showed that the stop-loss contract is the optimal form for a risk-averse insured and a risk-neutral insurer under general conditions. 
The stop-loss premium of $X$ with deductible $d$ is then defined as $\E[(X-d)_+]$, widely studied in actuarial science;  see e.g., \cite{DV98} and \cite{DDGKV02}. 
Our results imply a simple relation on the stop-loss premiums of two random losses. 
\begin{proposition}\label{prop:stl}
If $X\in L^1_+$ and $Z\in L^1$ satisfy $\E[Z|X\ge x]\ge0$ for all relevant values of $x$, 
then for any deductible $d\ge 0$, $X+Z$ has a larger stop-loss premium than $X$.
\end{proposition}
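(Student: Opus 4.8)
The plan is to recognize that Proposition~\ref{prop:stl} is essentially a direct corollary of Corollary~\ref{co:1}. The hypothesis ``$\E[Z\mid X\ge x]\ge 0$ for all relevant values of $x$'' is precisely the condition appearing in Corollary~\ref{co:1}, applied with the choice $W=X$ (so that $W\laweq X$ holds trivially) and with $Y:=X+Z$, so that $Y\laweq W+Z$ holds — indeed with equality, not merely in distribution. Hence Corollary~\ref{co:1} yields $X\le_{\rm icx} X+Z$.

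From here the conclusion is immediate: for every fixed deductible $d\ge 0$, the stop-loss transform $u_d(t)=(t-d)_+$ is an increasing convex function on $\R$, so by the definition of increasing convex order, $\E[u_d(X)]\le\E[u_d(X+Z)]$, i.e. $\E[(X-d)_+]\le\E[(X+Z-d)_+]$. This says exactly that the stop-loss premium of $X+Z$ dominates that of $X$ at every deductible level $d\ge 0$. (Note that although $X\ge 0$, the random variable $X+Z$ may take negative values; this causes no issue, since the stop-loss premium $\E[(X+Z-d)_+]$ is still well defined and nonnegative for $X+Z\in L^1$.)

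There is essentially no hard step here — the only things to verify are bookkeeping: that the phrase ``relevant values of $x$'' matches in the two statements (i.e. $\p(X\ge x)>0$), and that $(\cdot-d)_+$ is increasing and convex. If one prefers to avoid invoking Corollary~\ref{co:1}, the same argument can be run directly through Proposition~\ref{pr:main}: writing $W=-X-Z$ and applying the equivalence $X\le_{\rm ssd}Y\iff -X\ge_{\rm icx}-Y$, the condition $\E[Z\mid X\ge x]\ge 0$ translates into condition~\eqref{eq:new} for the pair $(-X,-Z)$, giving $-X-Z\le_{\rm ssd}-X$, equivalently $X+Z\ge_{\rm icx}X$; then one concludes with $u_d$ as above. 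Either route is short; the substantive content is entirely contained in the already-established Proposition~\ref{pr:main}/Corollary~\ref{co:1}.
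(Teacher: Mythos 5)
Your proposal is correct and follows essentially the same route as the paper, which also deduces the result from Corollary \ref{co:1} together with the standard fact that increasing convex order is equivalent to ordering of stop-loss premiums at all deductibles; you merely spell out the (easy) direction of that equivalence via $u_d(t)=(t-d)_+$ and note the alternative derivation through Proposition \ref{pr:main}. No gaps.
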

%\begin{proof}
%This proposition follows from Corollary \ref{co:1} by noting that $X+Z\ge_{\rm icx} X$ implies a larger stop-loss premium for $X+Z$. 
%\end{proof}
Proposition \ref{prop:stl} is a straightforward consequence of  Corollary \ref{co:1} and 
the well-known fact that the partial order over $L^1_+$ induced by stop-loss premiums at all deductible levels 
is equivalent to increasing convex order; see e.g., \cite{DDGKV02}.

\section{Further discussions}
\label{sec:4}

In this section we discuss some issues related to our results. 
We first present an extension of the representation in Theorem \ref{th:main}.

In the formulation of Theorem \ref{th:main}, 
in addition to the additive payoff $Z$, 
we relied on an extra random variable $W\laweq X$ satisfying $Y\laweq W+Z$ instead of directly using  $W=Y-Z$. 
This is needed in the classic representation  (called the Strassen theorem); see
 Theorem 4.A.5 of \cite{SS07}.
 The technical reason for such a construction is that the existence of $W$ satisfying certain distributional requirements depends on the choice of $Y$ and the underlying probability space.
 For instance,   an independent noise to $Y$ does not exist  if the $\sigma$-algebra of $Y$ is equal to $\mathcal F$. 
 
 In a recent paper, \cite{NWZ22}   established a refinement of  the Strassen theorem by allowing   $W=Y-Z$ in an arbitrary atomless probability space.  This  refinement, based on the theory of martingale optimal transport, is highly non-trivial.
  Using this result, we obtain a representation without involving an additional random variable $W$.  
Our results in the previous sections are presented in their current forms for consistency with the classic  theorem of Strassen in its most familiar form.
 In what follows, $X\ge_{\rm st} Y$ means $\p(X>x)\ge \p(Y>x)$ for all $x\in \R$. 

\begin{theorem}
\label{th:strong}
 For any $X,Y\in L^1$, 
 $
X\le_{\rm cx} Y
$ 
holds if and only if $
X\laweq Y-Z  
$
for some $ Z\in L^1$ such that $\E[Z]=0$ and
\begin{align}\label{eq:new5}
\mbox{$\E[Z|Y-Z\le x]\le 0$ for all relevant values of $x$.}\end{align}
Moreover, 
  $
X\ge_{\rm ssd} Y
$
holds if and only if $
X\ge_{\rm st} Y-Z  
$
for some $ Z\in L^1$ such that  \eqref{eq:new5} holds.
\end{theorem}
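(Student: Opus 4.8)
The plan is to bootstrap the new results (Theorem \ref{th:main}, Proposition \ref{pr:main}, Corollary \ref{co:2}) together with the refined Strassen theorem of \cite{NWZ22}, which allows the decomposition $X \laweq Y - Z$ on an arbitrary atomless space. I would first treat the convex-order statement, then derive the SSD statement from it.

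\textbf{Convex order.} For the ``if'' direction, suppose $X \laweq Y - Z$ with $\E[Z] = 0$ and \eqref{eq:new5}. Set $W = Y - Z$, so $W \laweq X$ and $Y \laweq W + Z$; then \eqref{eq:new5} is exactly $\E[Z \mid W \le x] \le 0$ for all relevant $x$, and combined with $\E[Z] = 0$ this is condition (iv) of Corollary \ref{co:2}, which gives $X \le_{\rm cx} Y$. For the ``only if'' direction, assume $X \le_{\rm cx} Y$. By the classic characterization, $\E[X] = \E[Y]$, and by the refined Strassen theorem of \cite{NWZ22} there is a random variable $Z \in L^1$ with $X \laweq Y - Z$ and $\E[Z \mid Y - Z] = 0$; in particular $\E[Z] = 0$. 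It remains to check \eqref{eq:new5}: with $W = Y - Z$, the martingale condition $\E[Z \mid W] = 0$ implies $\E[Z \mid W \le x] = 0 \le 0$ for every relevant $x$ (averaging the regular conditional expectation over $\{W \le x\}$). So \eqref{eq:new5} holds, completing this half.

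\textbf{SSD.} For the ``if'' direction, suppose $X \ge_{\rm st} Y - Z$ for some $Z \in L^1$ satisfying \eqref{eq:new5}. Write $W = Y - Z$. By Proposition \ref{pr:main} (applied with this $W$ and $Z$, using \eqref{eq:new5}), we get $W + Z \le_{\rm ssd} W$, i.e. $Y \le_{\rm ssd} W$. Since $X \ge_{\rm st} W$ implies $X \ge_{\rm ssd} W$ (stochastic dominance is stronger than SSD), transitivity gives $X \ge_{\rm ssd} W \ge_{\rm ssd} Y$. For the ``only if'' direction, assume $X \ge_{\rm ssd} Y$. The idea is to interpolate: choose $Y'$ with $Y' \laweq Y$ sitting between, i.e. first pass to a convex-order step and then a stochastic-dominance step. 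Concretely, by Theorem 4.A.6 of \cite{SS07} (or the standard fact that $X \ge_{\rm ssd} Y$ iff there is $Y'$ with $X \ge_{\rm st} Y' \ge_{\rm cx} Y$, equivalently $Y' \le_{\rm cx} Y$ with $Y'$ stochastically dominated by $X$), obtain such a $Y'$; then apply the convex-order part of this theorem to the pair $(Y', Y)$ to write $Y' \laweq Y - Z$ with \eqref{eq:new5} holding for $W = Y - Z = Y'$. Finally $X \ge_{\rm st} Y' = Y - Z$, which is the desired conclusion.

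\textbf{Main obstacle.} The delicate point is the ``only if'' direction for SSD: I must produce the intermediate variable $Y'$ with $X \ge_{\rm st} Y' \ge_{\rm cx} Y$ \emph{and then} realize the convex-order step via \cite{NWZ22} on the \emph{same} probability space so that $Y' = Y - Z$ literally. The existence of $Y'$ with these two order relations is classical (it is essentially the content of the decomposition of SSD into a dilation followed by an upward shift in distribution), but checking that the \cite{NWZ22} refinement can be invoked for the pair $(Y',Y)$ — which requires $\E[Y'] = \E[Y]$, guaranteed by $Y' \le_{\rm cx} Y$ — and that the resulting $Z$ is integrable needs a little care. Everything else is a direct bookkeeping of which of Proposition \ref{pr:main}, Corollary \ref{co:2}(iv), and the Strassen refinement to cite, together with the trivial implication $\ge_{\rm st} \Rightarrow \ge_{\rm ssd}$ and the averaging identity $\E[Z\mid W] = 0 \Rightarrow \E[Z \mid W \le x] = 0$.
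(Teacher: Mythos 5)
Your proposal is correct and follows essentially the same route as the paper: Corollary \ref{co:2} (with $W=Y-Z$) for the ``if'' half of the convex-order statement, Theorem 3.1 of \cite{NWZ22} plus the averaging of $\E[Z\mid Y-Z]=0$ for the ``only if'' half, and the decomposition of $\ge_{\rm ssd}$ into $\ge_{\rm st}$ and $\le_{\rm cx}$ (Theorem 4.A.6 of \cite{SS07}), together with Proposition \ref{pr:main}, for the SSD statement. The obstacle you flag is not a real one: since $\ge_{\rm st}$ depends only on the marginal laws, $X\ge_{\rm st} Y'$ and $Y'\laweq Y-Z$ already yield $X\ge_{\rm st} Y-Z$, so you never need the literal identity $Y'=Y-Z$ on the same space.
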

\begin{proof} The second equivalence is a direct consequence of the first one by decomposing $\ge_{\rm ssd}$ into $\ge_{\rm st}$ and $\le_{\rm cx}$; see Theorem 4.A.6 of \cite{SS07}.
Below we only show the first equivalence. 
The 	``if" statement follows from Corollary \ref{co:2} with $W=Y-Z$. For the ``only if" statement,  
by Theorem 3.1 of \cite{NWZ22}, 
there exists $W\in L^1$ such that $W\laweq X$  and 
$\E[Y|W]=W$.
Let $Z=Y-W$. It follows that
$\E[Y-W|W]=0$
and hence $\E[Z|Y-Z]=0$.  
 Therefore, $Z$ satisfies both $X \laweq Y-Z$ and \eqref{eq:new5}. 
\end{proof}

We make another remark regarding the   our results and
comonotonicity. Assume that $X$ and $Y$ in Theorem \ref{th:main} are comonotonic. By choosing $Z=Y-X$ and $W=X$,  condition \eqref{eq:new} becomes 
$$
\E[X|X\le x] \ge \E[Y|X\le x] \mbox{  for all relevant values of $x$.} 
$$
If $X$ is continuously distributed, then, using comonotonicity, this condition is
$$\int _0^p 
Q_X(t) \d t  \le \int _0^p  Q_Y (t) \d t  \mbox{~~~for all $p\in(0,1)$},
 $$
 which is a well-known equivalent condition for $X\ge _{\rm SSD} Y$ (see Lemma \ref{lem:1}).

%Next, we further discuss the interpretation of our representation results. 
We conclude the paper by discussing the interpretation of our representation results in relation to dependence concepts. 
As usual, $W$ is interpreted as the random wealth of a decision maker. 
In case $\E[Z]=0$, the condition \eqref{eq:new} 
  is   positive expectation dependence of $Z$ on $W$ (\cite{W87}); see \cite{LLW16} for its generalizations to higher order.
 Therefore, the implication from \eqref{eq:new} to 
 $W+Z\ge_{\rm cx} W$, which is Proposition \ref{pr:main} with $\E[Z]=0$,
yields the intuitive interpretation 
 that adding 
 a positively expectation dependent pertubation  
 increases the risk. Generally speaking, adding a positively dependent (in some vague sense) noise is risky; see \cite{DDGKV02} and \cite{PW15} for summaries on the intimate links between
 dependence concepts and stochastic orders.
 However, 
 if $\E[Z]\ne 0$, then \eqref{eq:new} no longer has an interpretation of positive dependence, as $\mathrm{cov}(W,Z)$ may be negative (see Example \ref{ex:2}).
Indeed, \eqref{eq:new} is strictly weaker than the combination of positive expectation dependence 
and $\E[Z]\le 0$.  
The resulting relation $W+Z\le_{\rm ssd} W$ now means that 
  an additive payoff with negative expected value 
in adverse scenarios of the random wealth $W$ (i.e., on events of the form $\{W\le x\}$), even if positively dependent,
makes the overall position less desirable for any decision makers who respect SSD, that is, those who prefer more wealth to less 
and are risk averse.

\subsection*{Acknowledgements}
The authors thank an Editor, two anonymous referees, David P.~Brown, and Felix Liebrich for helpful discussions and comments. 
Ruodu Wang acknowledges financial support from the Natural Sciences and Engineering Research Council of Canada (RGPIN-2024-03728) and Canada Research Chairs (CRC-2022-00141).


\begin{thebibliography}{10} 
  
%  \bibitem[\protect\citeauthoryear{Acerbi and Tasche}{Acerbi and Tasche}{2002}]{AT02}
%Acerbi, C. and Tasche, D. (2002). On the coherence of expected shortfall. \emph{Journal of Banking and Finance}, \textbf{26}(7), 1487--1503.


\bibitem[\protect\citeauthoryear{Arrow}{1963}]{A63}
Arrow, K. J.  (1963). Uncertainty and the welfare economics of medical care. \emph{American Economic Review}, \textbf{53}(5), 941--973.


  \bibitem[\protect\citeauthoryear{Brown}{Brown}{2017}]{B17}
Brown, D. P. (2017). New characterizations of increasing risk. \emph{Journal of Mathematical Economics}, \textbf{69}, 7--11.




\bibitem[\protect\citeauthoryear{Cheung et al.}{Cheung et al.}{2014}]{CDLT14} 
Cheung, K. C., Dhaene, J., Lo, A. and Tang, Q. (2014). Reducing risk by merging counter-monotonic risks. \emph{Insurance: Mathematics and Economics}, \textbf{54}, 58--65.


% \bibitem[\protect\citeauthoryear{Cohen}{Cohen}{1995}]{C95} 
% Cohen, M. D. (1995). Risk-aversion concepts in expected- and non-expected-utility models. \emph{Geneva Papers on Risk and Insurance Theory}, \textbf{20}(1), 73--91.


\bibitem[\protect\citeauthoryear{Denneberg}{1990}]{D90}
{Denneberg, D.} (1990). Premium calculation: Why standard deviation should be replaced by absolute deviation.
\textit{ASTIN Bulletin}, \textbf{20}, 181--190.


\bibitem[\protect\citeauthoryear{Denuit and Vermandele}{Denuit and Vermandele}{1998}]{DV98}
Denuit, M. and Vermandele, C. (1998). Optimal reinsurance and stop-loss order. {\em Insurance: Mathematics and Economics}, \textbf{22}(3), 229--233.


\bibitem[\protect\citeauthoryear{Dhaene et al.}{Dhaene et al.}{2002}]{DDGKV02}
{Dhaene, J., Denuit, M., Goovaerts, M.~J., Kaas, R. and Vynche, D.} (2002). {The concept of comonotonicity in actuarial science and finance: Theory}. {\em Insurance: Mathematics and Economics}, 
\textbf{31}(1), 3--33.

\bibitem[\protect\citeauthoryear{Embrechts and Wang}{Embrechts and Wang}{2015}]{EW15} {Embrechts, P. and Wang, R.} (2015). Seven proofs for the subadditivity of Expected Shortfall.  \emph{Dependence Modeling}, \textbf{3}, 126--140.


\bibitem[\protect\citeauthoryear{Guan et al.}{Guan et al.}{2024}]{GJW24}
Guan, Y., Jiao, Z. and Wang, R. (2024). A reverse ES (CVaR) optimization formula. \emph{North American Actuarial Journal}, 
\textbf{28}(3), 611--625. 



 \bibitem[\protect\citeauthoryear{Hadar and Russell}{Hadar and Russell}{1969}]{HR69} 
Hadar, J. and Russell, W.~R. (1969).
  Rules for ordering uncertain prospects.
  {\em American Economic Review}, \textbf{59}(1), 25--34.
  
  
\bibitem[\protect\citeauthoryear{He et al.}{2016}]{HTZ16}
He, J., Tang, Q. and Zhang, H. (2016). Risk reducers in convex order. \emph{Insurance: Mathematics and Economics}, \textbf{70}, 80--88.

\bibitem[\protect\citeauthoryear{Li et al.}{2016}]{LLW16}
Li, J., Liu, D. and Wang, J. (2016). Risk aversion with two risks: A theoretical extension. \emph{Journal of Mathematical Economics}, \textbf{63}, 100--105.


  \bibitem[\protect\citeauthoryear{McNeil et al.}{McNeil et al.}{2015}]{MFE15}
{McNeil, A. J., Frey, R. and Embrechts, P.} (2015). \emph{Quantitative
Risk Management: Concepts, Techniques and Tools}. Revised Edition.  Princeton, NJ:
Princeton University Press.


\bibitem[\protect\citeauthoryear{M\"uller and Stoyan}{M\"uller and Stoyan}{2002}]{MS02} {M\"uller, A. and Stoyan, D.} (2002). \emph{Comparison Methods for Stochastic Models and Risks}. Wiley, England.


\bibitem[\protect\citeauthoryear{Nutz et al.}{Nutz et al.}{2022}]{NWZ22}
{Nutz, M., Wang, R., Zhang, Z.} (2022).  Martingale transports and Monge maps. \emph{arXiv}: 2209.14432.




\bibitem[\protect\citeauthoryear{Puccetti and Wang}{Puccetti and
  Wang}{2015}]{PW15}
Puccetti, G. and Wang R.  (2015).
Extremal dependence concepts.
 \emph{Statistical Science},  \textbf{30}(4), 485--517.

\bibitem[\protect\citeauthoryear{Rothschild and Stiglitz}{1970}]{RS70}
Rothschild, M. and Stiglitz, J. E. (1970). Increasing risk: I. A definition. \emph{Journal of Economic Theory}, \textbf{2}(3), 225--243.



%\bibitem[\protect\citeauthoryear{Wang et al.}{Wang et al.}{2020}]{WWW20}
%Wang, R., Wei, Y. and Willmot, G. E. (2020). Characterization, robustness and aggregation of signed Choquet integrals. \emph{Mathematics of Operations Research}, \textbf{45}(3), 993--1015.





\bibitem[\protect\citeauthoryear{Shaked and Shanthikumar}{Shaked and Shanthikumar}{2007}]{SS07} Shaked, M. and Shanthikumar, J. G. (2007). \emph{Stochastic
Orders}. Springer Series in Statistics.

\bibitem[\protect\citeauthoryear{Shreve}{2004}]{S04}
Shreve, S. E. (2004). \emph{Stochastic Calculus for Finance II: Continuous-time Models}.  Springer, New York.

\bibitem[\protect\citeauthoryear{Strassen}{1965}]{S65} {Strassen, V.} (1965). The existence of probability measures with given marginals. \emph{Annals of Mathematical Statistics}, \textbf{36}(2), 423--439.

\bibitem[\protect\citeauthoryear{Wang and Zitikis}{2021}]{WZ21}
 Wang, R. and Zitikis, R. (2021). An axiomatic foundation for the Expected Shortfall. \emph{Management Science}, \textbf{67}, 1413--1429.
 
 \bibitem[\protect\citeauthoryear{Wang et al.}{Wang et al.}{1997}]{WYP97}
{Wang, S., Young, V. R. and Panjer, H. H.} (1997). Axiomatic characterization of insurance prices. \emph{Insurance: Mathematics and Economics}, \textbf{21}(2), 173--183.


 
\bibitem[\protect\citeauthoryear{Wright}{1987}]{W87}
 Wright, R. (1987). Expectation dependence of random variables, with an application in portfolio theory. \emph{Theory and Decision}, \textbf{22}(2), 111--124.
 

 \end{thebibliography}
\end{document}